\documentclass[
preprint,
 amsmath,amssymb,
 aps, physrev,
 floatfix,
]{revtex4-2}

\usepackage{amsmath,amssymb,amsfonts}
\usepackage{physics}
\usepackage{amsthm}
\usepackage{booktabs}
\usepackage{float}
\usepackage{algorithm}
\restylefloat{algorithm}
\usepackage{algpseudocode}
\newtheorem{theorem}{Theorem}
\newtheorem{lemma}{Lemma}
\newtheorem{proposition}{Proposition}
\newtheorem{corollary}{Corollary}
\newtheorem{remark}{Remark}

\usepackage{graphicx}%
\usepackage{dcolumn}%
\usepackage{bm}%
\usepackage{tikz}
\usepackage{quantikz}
\usepackage{tabularx}
\usepackage{hyperref}

\newcommand{\myheading}[1]{\vspace{1ex}\noindent\textbf{#1}\quad}

\begin{document}

\preprint{APS/arXiv draft}

\title{\textbf{Asymmetric Linear-Combination-of-Unitaries Realization of Quantum Convolution via Modular Adders}
}%

\author{Chen Yang}
\affiliation{%
 Department of Chemistry, Faculty of Science, Kyushu University, 744 Motooka Nishi-ku, Fukuoka 819-0395, Japan
}%
\author{Kodai Kanemaru}%
\affiliation{%
  Fukui Institute for Fundamental Chemistry, Kyoto University, Takano-Nishibiraki-cho 34-4, Sakyou-ku, Kyoto 606-8103, Japan \\
  Department of Complex Systems Science, Graduate School of Informatics, \\
 Nagoya University, Furo-cho, Chikusa-ku, Nagoya, Aichi 464-8601, Japan
}%

\author{Norio Yoshida}
\affiliation{%
 Department of Complex Systems Science, Graduate School of Informatics, \\
 Nagoya University, Furo-cho, Chikusa-ku, Nagoya, Aichi 464-8601, Japan
}%

\author{Sergey Gusarov}
\affiliation{
National Research Council Canada,
Ottawa, ON, Canada
}%

\author{Hiroshi C. Watanabe}
 \email{Contact author: hcwatanabe@chem.kyushu-univ.jp}
\affiliation{
 Department of Chemistry, Faculty of Science, Kyushu University, 744 Motooka Nishi-ku, Fukuoka 819-0395, Japan\\
 Quantum Computing Center, Keio University, 3-14-1 Hiyoshi, Kohoku-ku, Yokohama, Kanagawa 223-8522, Japan\\
 Quantum and Spacetime Research Institute, Kyushu University, 744 Motooka Nishi-ku, Fukuoka 819-0395, Japan
}%

\date{\today}

\begin{abstract}
Discrete circular convolution over $\mathbb{Z}/N\mathbb{Z}$ is a linear operator and can be implemented on quantum hardware within the linear-combination-of-unitaries (LCU) framework. 
In this work, we make this connection explicit through an asymmetric-LCU formulation: circular convolution is the postselected block of a circuit whose controlled-shift unitary is modular addition on computational-basis states. 
The asymmetry is essential: fixing the postselection state to the uniform state $\ket{u}$ while supplying the kernel state $\ket{\bm b}$ as the input ancilla naturally preserves the complex coefficients $b_i$ within the block, whereas a symmetric overlap would yield $|b_i|^2$ weights and erase their phases. 
Accordingly, when $\ket{\bm a}$ and $\ket{\bm b}$ are already supplied by upstream quantum routines, the convolution subroutine requires only the fixed uncompute $\mathrm{PREP}_u^\dagger$, completely avoiding the need for a kernel-dependent inverse preparation $\mathrm{PREP}_b^\dagger$.

We then introduce a reversal matrix $J_n=X^{\otimes n}$ and define reflected shifts $\widetilde{L}_{i,n}=L_{i,n}J_n$. 
This symmetrization yields a recursive operator algebra for convolution that is natively compatible with standard LCU/block-encoding workflows. The resulting symmetrized operator differs from standard circular convolution only by one known input-side $J_n$ layer.
Crucially, for real-valued kernels, the resulting operator $H_n(\bm{b})=\sum_i b_i\widetilde{L}_{i,n}$ is Hermitian, providing a direct Hermitian interface for quantum singular value transformation (QSVT) and related spectral transformations.

Based on this framework, we present a structurally transparent recursive construction, paired with an exactly equivalent optimized bitwise compilation of the same $\mathrm{SELECT}$ block.
Finally, we evaluate implementation trade-offs and resource scaling under explicit cost-model conventions.
\end{abstract}

\maketitle

\section{Introduction}

Whether ``quantum convolution'' is physically meaningful depends on \emph{which map} is being implemented. 
Lomont's no-go argument rules out direct bilinear amplitude multiplication as a physical quantum operation~\cite{Lomont2003}. 
However, discrete convolution, when formulated as a linear operator acting on basis-indexed vectors, is fully compatible with laws of quantum mechanics and can be realized by linear-combination-of-unitaries (LCU) and block-encoding techniques~\cite{Childs2012,Berry2015,Gilyen2019,Low2019}.

Two key lines of prior work established this operator-centric viewpoint. 
First, Zhou and Wang showed that circulant matrices decompose into cyclic shifts and can be implemented via modular addition circuits within an LCU pipeline~\cite{Wang2017}. 
Second, Castelazo \textit{et al.} formulated finite-group convolution as a linear combination of left-regular shifts and provided both LCU-based and Fourier-domain constructions~\cite{Castelazo2022}. 
For the cyclic group $\mathbb Z/N\mathbb Z$, these two perspectives describe the same mathematical object viewed from arithmetic and representation-theoretic angles.
The connection is implicit across the two works but not stated directly at the circuit level: Zhou and Wang emphasize circulant-matrix multiplication, whereas Castelazo \textit{et al.} emphasize the general group-convolution framework~\cite{Wang2017,Castelazo2022}.

A central technical observation of this work is that, for discrete circular convolution over $\mathbb Z/N\mathbb Z$, this equivalence can be written explicitly at the circuit level as an asymmetric LCU/block-encoding statement, in the sense used in quantum signal processing (QSP), quantum singular value transformation (QSVT), and qubitization/LCU frameworks~\cite{Low2017,Gilyen2019,Low2019}. The asymmetry is crucial. The postselection state is the fixed uniform state $\ket{u}$, while the input ancilla is the problem-dependent kernel state $\ket{\bm b}$; this preserves the complex coefficients $b_i$ themselves in the postselected block. By contrast, a same-state symmetric overlap using $\mathrm{PREP}_b$ on both sides would produce $|b_i|^2$ weights, so the phases of the complex amplitudes would cancel. At the formal theorem level we still write the construction in the standard oracle language with $\mathrm{PREP}_b$, but in the supplied-state specialization the convolution subroutine requires only the fixed uncompute $\mathrm{PREP}_u^\dagger$ and no kernel-dependent inverse preparation $\mathrm{PREP}_b^\dagger$. To our knowledge, prior works did not state this bridge in the circular-convolution setting in this exact asymmetric supplied-state form: Zhou and Wang emphasize circulant implementation via modular-addition-based $\mathrm{select}(V)$ circuits, whereas Castelazo \textit{et al.} formulate general group convolution through entry or Fourier-entry oracle access.

After establishing this equivalence, we introduce a reversal matrix $J_n$ and define symmetrized shifts $\widetilde{L}_{i,n}=L_{i,n}J_n$. 
This step exposes a recursive operator structure for the convolution family. 
For real-valued kernels, the symmetrized operator becomes Hermitian, giving a direct Hermitian interface for QSVT and related downstream spectral transformations.

We then present a structural recursive realization of $\mathrm{SELECT}_{\widetilde{L}}$, relate it to an exactly equivalent supplementary bitwise compilation, and use standard quantum Fourier transform (QFT) and ripple-carry adders only as canonical implementation instantiations of the modular-addition shift block.

For clarity, the manuscript makes three claims of novelty, all relative to the circulant-adder viewpoint of Ref.~\cite{Wang2017} and the group-convolution viewpoint of Ref.~\cite{Castelazo2022}:
\begin{enumerate}
\item \textbf{Explicit asymmetric-LCU formulation and supplied-state specialization.}
We make explicit that circular convolution can be realized as an asymmetric postselected LCU block with fixed postselection state $\ket{u}$ and kernel state $\ket{\bm b}$, thereby preserving the complex coefficients $b_i$ themselves rather than the weights $|b_i|^2$.
\item \textbf{$J_n$-symmetrized Hermitian operator pipeline.}
We introduce the symmetrized family $\widetilde{L}_{i,n}=L_{i,n}J_n$ and the corresponding operator $H_n(\bm b)$, which differs from standard convolution only by one known input-side $J_n$ layer and becomes Hermitian for real-valued kernels.
\item \textbf{Recursion-first structural normal form with exact compiled realization.}
We present a structural recursion for the $J_n$-symmetrized reflected-shift family and relate it to an exactly equivalent compiled carry-propagation implementation of the same $\mathrm{SELECT}_{\widetilde L}$ block.
\end{enumerate}
We do not claim novelty for cyclic-shift decompositions themselves, for the identity $\mathrm{ADD}_N=\mathrm{SELECT}_L$, or for individual QFT and ripple-carry adder primitives. We also do not claim an asymptotic improvement in arithmetic complexity over the best standalone modular adder constructions. Rather, the contribution is a sharper operator-level formulation, a structurally transparent recursive normal form, and a direct Hermitian route for inverse-type spectral processing. The QFT and ripple-carry adder sections are therefore included as canonical realizations and comparison points for this common block, rather than as independent algorithmic contributions of the present work.

The remainder of the paper is organized as follows. Section~\ref{sec:preliminaries} reviews the classical and group-theoretic formulations of circular convolution and states the asymmetric-LCU realization via modular addition. Section~\ref{sec:quantum_representation_of_convolution} introduces the $J_n$-symmetrized operator family and its structural recursion. Section~\ref{sec:circuits} presents the direct recursive circuit construction together with standard QFT adder and ripple-carry adder realizations. Section~\ref{sec:complexity} discusses resource scaling under the different input and cost models used in this work. Section~\ref{sec:blockencoding_qsvt} then returns to the Hermitian structure of $H_n(\bm b)$ and its use in block-encoding and deconvolution via QSVT.

\section{Preliminaries and Asymmetric-LCU Realization of Circular Convolution}
\label{sec:preliminaries}
\subsection{Classical Circular Convolution}
For two sequences $\bm{a}=(a_0,\dots,a_{N-1})$ and $\bm{b}=(b_0,\dots,b_{N-1})$ of length $N$, the classical circular convolution $\bm{a} \star \bm{b} = \bm{c}=(c_0,\dots,c_{N-1})$ is
\begin{equation}\label{eq:classical_convolution}
    c_k = \sum_{j=0}^{N-1} a_j b_{k-j \bmod N},
    \quad k=0,\dots,N-1.
\end{equation}

If $N$ is a power of $2$, we encode the vector into $n=\log_2 N$ bits. Otherwise, we may embed the vector into the next power-of-two dimension by zero-padding, a standard technique in discrete Fourier transform and convolution algorithms~\cite{Cooley1965,Oppenheim1999}.

\subsection{Group-Theoretic Form}
For a finite group $G$, the convolution of functions $f,g : G\to\mathbb{C}$ is defined as
\begin{equation}
    (f \star g)(u) = \sum_{v\in G} f(uv^{-1}) g(v).
\end{equation}
For the cyclic group $G=\mathbb{Z}/N\mathbb{Z}$, this is equivalent to discrete circular convolution~\cite{Castelazo2022}.
Let $\{L_{i,N}\}_{i=0}^{N-1}$ denote the left-regular representation of $\mathbb{Z}/N\mathbb{Z}$.
These matrices are unitary permutation matrices that act on a vector $\bm{a}$ as 
\begin{equation}
\left( L_{i,N} \bm{a} \right)_k = (\bm{a})_{k-i \pmod N}.
\end{equation}
Furthermore, these representation matrices satisfy the composition law:
\begin{equation} \label{eq:L_combination}
L_{i,N} L_{j,N} = L_{(i+j) \bmod N, \, N}
\end{equation}
Then, the convolution of $\bm{a}$ with the kernel $\bm{b}$ can be written as:
\begin{equation}\label{eq:convolution_with_L}
    \bm{c} 
    = \sum_{i=0}^{N-1} b_i \, L_{i,N} \bm{a}
    =C_N(\bm{b})\bm{a},
\end{equation}

Thus, circular convolution is naturally expressed as a linear combination of group-shift operators. 
This representation forms the starting point for quantum implementations based on the LCU framework.

\subsection{LCU Construction and Modular Adder SELECT}

A general quantum implementation of group convolution was established by Castelazo \textit{et al.}, who formulated convolution over finite groups within the LCU paradigm and then discussed separate Fourier-regime simplifications for the Abelian case~\cite{Castelazo2022}.

Given the decomposition
\begin{equation}
C_N(\bm{b})
=
\sum_{i=0}^{N-1}
b_i\, L_{i,N},
\end{equation}
we formulate the circular convolution construction using two standard ingredients:

\begin{enumerate}
\item A state-preparation unitary $\mathrm{PREP}_b$ that prepares the complex-valued kernel amplitudes
\begin{equation}
\mathrm{PREP}_b \ket{0}^{\otimes n}
=
\ket{\bm{b}}
=
\sum_{i=0}^{N-1} b_i \ket{i},
\qquad
\sum_i |b_i|^2=1,\quad b_i=|b_i|\mathrm{e}^{\mathrm{i}\theta_i},
\end{equation}
(This $\mathrm{PREP}_b$ convention is used throughout this work.
Non-normalized kernels are handled by the scaling discussion in Remark~\ref{rem:ab_normalization}.)

\item A SELECT unitary that applies the group action conditioned on the index register,
\begin{equation}
\mathrm{SELECT}_L :
\ket{i}\ket{k}
\longmapsto
\ket{i}\, L_{i,N}\ket{k}.
\label{eq:select_L}
\end{equation}
\end{enumerate}

To avoid notation clashes with the convolution operator symbol $C_N(\bm{b})$, we denote controlled operations uniformly by
\begin{equation}
W_q[O]
:=
\left(\ket{0}\!\bra{0}\right)_q\!\otimes I
\;+\;
\left(\ket{1}\!\bra{1}\right)_q\!\otimes O,
\label{eq:controlled_W_def}
\end{equation}
where $q$ is the control qubit and $O$ acts on the target register.

The asymmetry of the postselection is essential. A same-state symmetric overlap based on $\mathrm{PREP}_b$ would yield
\begin{equation}
(\bra{0}^{\otimes n}_{A}\otimes I_D)
(\mathrm{PREP}_b^\dagger\otimes I_D)\,
\mathrm{SELECT}_L\,
(\mathrm{PREP}_b\otimes I_D)
(\ket{0}^{\otimes n}_{A}\otimes I_D)
=
\sum_{i=0}^{N-1}|b_i|^2 L_{i,N},
\end{equation}
so the kernel phases cancel and only the weights $|b_i|^2$ remain. We therefore use the asymmetric overlap with left postselection state $\ket{u}$ and right ancilla state $\ket{\bm b}$, which preserves the coefficients $b_i$ themselves.

For the cyclic group $G=\mathbb{Z}/N\mathbb{Z}$, the group action corresponds exactly to modular addition in the computational basis:
\begin{equation}
\ket{i}\ket{k}
\longmapsto
\ket{i}\ket{k+i \!\!\!\pmod{N}}.
\end{equation}

Hence, the SELECT operation can be instantiated using standard quantum modular adder circuits~\cite{Wang2017}. 
Well-known constructions include ripple-carry adders (such as the Vedral-Barenco-Ekert (VBE) adder~\cite{Vedral1996} and Cuccaro adders~\cite{Cuccaro2004}) and QFT-based adders~\cite{Draper2000}.
Ripple-carry adders (Cuccaro architecture) admit linear-size/depth implementations in $n$, whereas exact QFT-based adders use $\mathcal{O}(n^2)$ controlled phases under a straightforward logical-gate count, up to architecture-dependent depth/precision trade-offs.
With these ingredients, in the circular-convolution setting considered here, $C_N(\bm{b})$ is obtained as the postselected block of
$(\mathrm{PREP}_u^\dagger\!\otimes I)\,\mathrm{SELECT}_L\,(\mathrm{PREP}_b\!\otimes I)$,
in the standard LCU/block-encoding sense~\cite{Childs2012,Berry2015,Gilyen2019},
as stated precisely in Theorem~\ref{thm:asym_modadder_conv}.
For the remainder of this paper, whenever $N=2^n$, we use the equivalent $n$-qubit notation
\begin{equation}
L_{i,n}:=L_{i,N}.
\label{eq:L_n_from_L_N}
\end{equation}
\begin{theorem}[Asymmetric LCU realization of circular convolution]
\label{thm:asym_modadder_conv}
Let $N=2^n$, with ancilla/index register $A$ and data register $D$, each of size $n$ qubits.
Define the controlled-shift operator:
\begin{equation}
L_{i,n}\ket{k}=\ket{k+i \pmod N},\qquad
\mathrm{SELECT}_{L}
:=\sum_{i=0}^{N-1}\ket{i}\!\bra{i}_{A}\otimes L_{i,n}.
\label{eq:select_L_def}
\end{equation}
Let $\ket{\bm{a}}_D = \sum_{k=0}^{N-1}a_k\ket{k}_D$ be an arbitrary data state. 
Assume unitaries $\mathrm{PREP}_b$ and $\mathrm{PREP}_u$ prepare the kernel state $\ket{\bm{b}}_A = \sum_{i=0}^{N-1}b_i\ket{i}_A$ and the uniform state $\ket{u}_A = \frac{1}{\sqrt{N}}\sum_{i=0}^{N-1}\ket{i}_A$ from $\ket{0}^{\otimes n}_A$, respectively. 
Consider the asymmetric LCU operator
\begin{equation}
\mathcal U_{\mathrm{LCU}}^{(L)}
:=
(\mathrm{PREP}_u^\dagger\otimes I_{D})\,
\mathrm{SELECT}_{L}\,
(\mathrm{PREP}_b\otimes I_{D}).
\label{eq:U_LCU_L}
\end{equation}
Then, projecting the ancilla register to $\bra{0}^{\otimes n}_A$ yields
\begin{equation}
(\bra{0}^{\otimes n}_{A}\otimes I_{D})\,
\mathcal U_{\mathrm{LCU}}^{(L)}\,
(\ket{0}^{\otimes n}_{A}\otimes\ket{\bm{a}}_{D})
=
\frac{1}{\sqrt N}\,C_n(\bm{b})\ket{\bm{a}}_{D},
\label{eq:block_of_Cn}
\end{equation}
where $C_n(\bm{b}):=\sum_{i=0}^{N-1} b_i\,L_{i,n}$.
Hence the output amplitudes satisfy
\begin{equation}
\big(C_n(\bm{b})\bm{a}\big)_y
=
\sum_{i=0}^{N-1} b_i\,a_{(y-i)\bmod N},
\end{equation}
which is the discrete circular convolution.
\end{theorem}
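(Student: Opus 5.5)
The plan is a direct computation, applying the three factors of $\mathcal U_{\mathrm{LCU}}^{(L)}$ in order to $\ket{0}^{\otimes n}_A\otimes\ket{\bm a}_D$ and then contracting with $\bra{0}^{\otimes n}_A$. First, by the definition of $\mathrm{PREP}_b$, the right factor produces $\ket{\bm b}_A\otimes\ket{\bm a}_D=\sum_i b_i\ket{i}_A\otimes\ket{\bm a}_D$. Next, by linearity and the block-diagonal form \eqref{eq:select_L_def}, $\mathrm{SELECT}_L$ sends this to $\sum_i b_i\ket{i}_A\otimes L_{i,n}\ket{\bm a}_D$. The coefficients $b_i$ survive unconjugated because they enter only through the ket.

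For the left factor, I will move $\mathrm{PREP}_u^\dagger$ onto the bra. That is, $(\bra{0}^{\otimes n}_A\otimes I_D)(\mathrm{PREP}_u^\dagger\otimes I_D)=(\bra{0}^{\otimes n}\mathrm{PREP}_u^\dagger)_A\otimes I_D=\bra{u}_A\otimes I_D$, using $(\mathrm{PREP}_u\ket{0}^{\otimes n})^\dagger=\bra{u}$. Since $\braket{u}{i}=1/\sqrt N$ for every basis index $i$, contracting gives $\frac{1}{\sqrt N}\sum_i b_i L_{i,n}\ket{\bm a}_D=\frac{1}{\sqrt N}C_n(\bm b)\ket{\bm a}_D$, which is \eqref{eq:block_of_Cn}. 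The only point to state carefully is that the partial contraction acts on $A$ alone, so the data-register operators pass through untouched. This step is where the asymmetry is used: a real, index-independent overlap is what keeps the phases of the $b_i$. Contrast the symmetric case displayed just before the theorem, where the overlap is $\bar b_i$ and produces $|b_i|^2$.

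Finally, I read off amplitudes. From $L_{i,n}\ket{k}=\ket{k+i\bmod N}$ we get $L_{i,n}\ket{\bm a}=\sum_k a_k\ket{k+i}=\sum_y a_{(y-i)\bmod N}\ket{y}$, by reindexing $y=k+i \bmod N$; this is a bijection on $\mathbb Z/N\mathbb Z$ for each fixed $i$. Summing with weights $b_i$ gives $(C_n(\bm b)\bm a)_y=\sum_i b_i a_{(y-i)\bmod N}$. By the change of variable $j=y-i$, this equals \eqref{eq:classical_convolution}. It is also consistent with the left-regular action stated in the group-theoretic subsection.

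There is no real obstacle; the proof is bookkeeping. The one step needing care is the reindexing modulo $N$, together with the sign convention that $L_{i,n}$ adds $i$ rather than subtracting it. Getting that convention wrong would produce correlation instead of convolution.
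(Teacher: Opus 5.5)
Your proposal is correct and follows essentially the same route as the paper's proof. It rewrites $\bra{0}^{\otimes n}\mathrm{PREP}_u^\dagger$ as $\bra{u}$, pushes $\mathrm{SELECT}_L$ through $\sum_i b_i\ket{i}_A\otimes\ket{\bm a}_D$, and uses $\braket{u}{i}=1/\sqrt N$ to obtain $\frac{1}{\sqrt N}C_n(\bm b)\ket{\bm a}_D$; your explicit reindexing $y=k+i \bmod N$ just fills in the step the paper summarizes as ``expanding in the computational basis.''
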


\begin{proof}
Using
\begin{equation}
(\bra{0}^{\otimes n}\mathrm{PREP}_u^\dagger)=\bra{u},
\end{equation}
we obtain
\begin{align}
&(\bra{0}^{\otimes n}_{A}\otimes I_{D})\,
\mathcal U_{\mathrm{LCU}}^{(L)}\,
(\ket{0}^{\otimes n}_{A}\otimes\ket{\bm{a}}_{D}) \nonumber \\
&= (\bra{u}_{A}\otimes I_{D})\,
\mathrm{SELECT}_{L}\,
(\ket{\bm{b}}_{A}\otimes\ket{\bm{a}}_{D}) \nonumber \\
&= (\bra{u}_{A}\otimes I_{D})\,
\sum_{i=0}^{N-1}b_i\ket{i}_{A}\otimes L_{i,n} \ket{\bm{a}}_{D} \nonumber \\
&= \sum_{i=0}^{N-1}b_i \langle u|i\rangle_{A}
\, L_{i,n} \ket{\bm{a}}_{D} \nonumber \\
&= \frac{1}{\sqrt N}\sum_{i=0}^{N-1} b_i\,L_{i,n}\ket{\bm{a}}_{D}
= \frac{1}{\sqrt N}\,C_n(\bm{b})\ket{\bm{a}}_{D}.
\end{align}
Expanding in the computational basis gives
\begin{equation}
\big(C_n(\bm{b})\bm{a}\big)_y
=
\sum_{i=0}^{N-1} b_i\,a_{(y-i)\bmod N},
\end{equation}
i.e., circular convolution.
\end{proof}

\begin{remark}[Asymmetry and the state-input specialization]
\label{rem:asym_state_input}
Theorem~\ref{thm:asym_modadder_conv} is written in the standard oracle language with $\mathrm{PREP}_b$. The practical advantage of the asymmetric specialization appears when the kernel state is already available externally. In that setting one may use
\begin{equation}
\mathcal V_{\mathrm{state}}^{(L)}:=
(\mathrm{PREP}_u^\dagger\otimes I_D)\,\mathrm{SELECT}_L,
\end{equation}
so that
\begin{equation}
(\bra{0}^{\otimes n}_{A}\otimes I_D)\,
\mathcal V_{\mathrm{state}}^{(L)}\,
(\ket{\bm b}_{A}\otimes\ket{\bm a}_{D})
=
\frac{1}{\sqrt N}\,C_n(\bm{b})\ket{\bm a}_{D}.
\end{equation}
Thus the convolution subroutine itself requires only the fixed uncompute $\mathrm{PREP}_u^\dagger$ and no kernel-dependent inverse preparation $\mathrm{PREP}_b^\dagger$. This same asymmetry is also what preserves the complex coefficients $b_i$; a same-state symmetric overlap using $\mathrm{PREP}_b$ on both sides would instead give $\sum_i |b_i|^2 L_{i,n}$.
\end{remark}

\begin{remark}[Normalization of $\bm{a}$ and $\bm{b}$]
\label{rem:ab_normalization}
Amplitude encoding requires normalized states.  
For arbitrary nonzero classical vectors $\bm{a},\bm{b}\in\mathbb C^N$, define
\begin{equation}
\hat{\bm{a}}:=\frac{\bm{a}}{\|\bm{a}\|_2},\qquad
\hat{\bm{b}}:=\frac{\bm{b}}{\|\bm{b}\|_2},
\end{equation}
and prepare
\begin{equation}
\mathrm{PREP}_{a}\ket{0}^{\otimes n}=\mathrm{PREP}_{D}\ket{0}^{\otimes n}=\ket{\hat{\bm{a}}},\qquad
\mathrm{PREP}_{b}\ket{0}^{\otimes n}=\mathrm{PREP}_{A}\ket{0}^{\otimes n}=|\hat{\bm{b}}\rangle.
\end{equation}
Then Eq.~\eqref{eq:block_of_Cn} becomes
\begin{equation}
(\bra{0}^{\otimes n}_{A}\otimes I_{D})\,
\mathcal U_{\mathrm{LCU}}^{(L)}\,
(\ket{0}^{\otimes n}_{A}\otimes\ket{\hat{\bm{a}}}_{D})
=
\frac{1}{\sqrt N}\,C_n(\hat{\bm{b}})\ket{\hat{\bm{a}}}_{D}
=
\frac{1}{\sqrt N\,\|\bm{a}\|_2\|\bm{b}\|_2}\sum_{y=0}^{N-1}(C_n(\bm{b})\bm{a})_y\ket{y}_{D}.
\end{equation}
Hence the postselected data state is proportional to the state vector associated with the unnormalized circular convolution
$C_n(\bm{b})\bm{a}$, with a known global scale factor
$\big(\sqrt N\,\|\bm{a}\|_2\|\bm{b}\|_2\big)^{-1}$.
The LCU success probability is
\begin{equation}
p_{\mathrm{succ}}
=
\frac{\|C_n(\bm{b})\bm{a}\|_2^2}
{N\,\|\bm{a}\|_2^2\,\|\bm{b}\|_2^2}.
\end{equation}
(If $\bm{a}=\bm{0}$ or $\bm{b}=\bm{0}$, the convolution output is trivially $\bm{0}$.)
\end{remark}
Theorem~\ref{thm:asym_modadder_conv} gives the operator-level asymmetric-LCU formulation of circular convolution. Lemma~\ref{lem:adder_independence} then shows that standard modular addition realizes the required unitary $\mathrm{SELECT}_L$.

\begin{lemma}[Modular adder realization of $\mathrm{SELECT}_L$]
\label{lem:adder_independence}
Using the definitions of $L_{i,n}$ and $\mathrm{SELECT}_L$ from Eq.~\eqref{eq:select_L_def}, define the modular addition operator on the computational basis by
\begin{equation}
\mathrm{ADD}_N\ket{i}_{A}\ket{k}_{D}:=\ket{i}_{A}\ket{k+i \!\!\!\pmod N}_{D}.
\end{equation}
Then
\begin{equation}
\mathrm{ADD}_N=\mathrm{SELECT}_L.
\end{equation}
Hence any exact circuit synthesis of $\mathrm{ADD}_N$ (e.g., ripple-carry adder or QFT adder) yields the same $\mathrm{SELECT}_L$, and therefore the same convolution block in Eq.~\eqref{eq:U_LCU_L}.
\end{lemma}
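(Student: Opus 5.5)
The plan is to prove the operator identity $\mathrm{ADD}_N=\mathrm{SELECT}_L$ by comparing the two operators on the computational basis $\{\ket{i}_A\ket{k}_D\}_{i,k=0}^{N-1}$ of the joint $2n$-qubit space. Both are linear, so agreement on a basis implies equality. Uniqueness of linear extension then gives the statement about circuit syntheses for free.

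First, evaluate $\mathrm{SELECT}_L$ on a basis vector using Eq.~\eqref{eq:select_L_def}:
\begin{equation}
\mathrm{SELECT}_L\ket{i}_A\ket{k}_D=\sum_{j=0}^{N-1}\ket{j}\!\braket{j}{i}_A\otimes L_{j,n}\ket{k}_D=\ket{i}_A\otimes L_{i,n}\ket{k}_D=\ket{i}_A\ket{k+i \bmod N}_D.
\end{equation}
This uses orthonormality $\braket{j}{i}=\delta_{ij}$ and the defining action of $L_{i,n}$. By definition, $\mathrm{ADD}_N$ sends $\ket{i}_A\ket{k}_D$ to the same vector. Hence the two operators agree on all $N^2$ basis vectors and coincide as linear maps. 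As a consistency check, $\mathrm{ADD}_N$ is then unitary, since it is a block-diagonal direct sum of the permutation matrices $L_{i,n}$.

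For the second part of the lemma, let $\mathcal{C}$ be any exact circuit synthesis of $\mathrm{ADD}_N$, for example a ripple-carry adder or a QFT adder. I interpret ``exact'' as meaning that $\mathcal{C}$ acts as $\mathrm{ADD}_N$ on the register pair $(A,D)$. Any work qubits are returned to their initial state, so the restriction to $A\otimes D$ is exactly $\mathrm{ADD}_N$, possibly up to a global phase that is fixed by convention. Then $\mathcal{C}=\mathrm{SELECT}_L$ as an operator. Substituting this into Eq.~\eqref{eq:U_LCU_L} gives the same $\mathcal U_{\mathrm{LCU}}^{(L)}$, and Theorem~\ref{thm:asym_modadder_conv} then yields the same postselected block $\frac{1}{\sqrt N}C_n(\bm b)$.

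The only step needing care is this ancilla/phase caveat in the meaning of ``exact synthesis.'' Clean uncomputation of the carry ancillas in a ripple-carry adder is required for the induced map on $A\otimes D$ to be unitary and equal to $\mathrm{ADD}_N$. A relative phase depending on $(i,k)$ would break the identity. I would state this assumption explicitly rather than prove it, since it is part of what ``exact synthesis'' means.
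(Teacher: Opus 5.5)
Your proof is correct and follows the paper's argument essentially verbatim: evaluate $\mathrm{SELECT}_L$ on $\ket{i}_A\ket{k}_D$ using $\braket{j}{i}=\delta_{ij}$, match the result with the defining action of $\mathrm{ADD}_N$, conclude equality by linearity, and obtain the convolution-block claim by substitution into $\mathcal U_{\mathrm{LCU}}^{(L)}$. Your extra remarks on clean uncomputation of carry ancillas and on phases in the meaning of ``exact synthesis'' make explicit what the paper leaves implicit. One caution: a nontrivial global phase would carry through to the postselected block, so ``the same block'' then holds only up to that phase.
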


\begin{proof}
For any computational-basis input $\ket{i}_{A}\ket{k}_{D}$,
\begin{align}
\mathrm{SELECT}_L(\ket{i}_{A}\ket{k}_{D})
&=
\left(\sum_{j=0}^{N-1}\ket{j}\!\bra{j}_{A}\otimes L_{j,n}\right)
(\ket{i}_{A}\ket{k}_{D}) \\
&=
\sum_{j=0}^{N-1}\left(\ket{j}\!\braket{j}{i}_{A}\right)\otimes L_{j,n}\ket{k}_{D} \\
&=
\ket{i}_{A}\otimes L_{i,n}\ket{k}_{D} \\
&=
\ket{i}_{A}\ket{k+i \!\!\!\pmod N}_{D} \\
&=
\mathrm{ADD}_N(\ket{i}_{A}\ket{k}_{D}).
\end{align}
Thus $\mathrm{SELECT}_L$ and $\mathrm{ADD}_N$ coincide on the computational basis, so they are the same unitary.
The LCU statement follows by direct substitution into $\mathcal U_{\mathrm{LCU}}^{(L)}$.
\end{proof}

\section{Symmetrized Convolution and Recursive Operator Structure}\label{sec:quantum_representation_of_convolution}
We now introduce the reversal matrix $J_N$ and the symmetrized convolution operator used throughout the remainder of this work. 
Define the reversed input $\bm{a}_{\mathrm{R}}:=J_N\bm{a}$, i.e.,
\begin{equation}
(\bm{a}_{\mathrm{R}})_i=(\bm{a})_{N-1-i}.
\end{equation}
Then we define
\begin{equation}
    H_N(\bm{b}) := \sum_{i=0}^{N-1} b_i \, L_{i,N} J_{N},  \label{eq:LCU_form_of_convolution}
\end{equation}
so that
\begin{equation}
\bm{c}=C_N(\bm{b})\bm{a}=H_N(\bm{b})\bm{a}_{\mathrm{R}}.
\end{equation}
Equivalently, since $J_N^2=I$,
\begin{equation}\label{eq:C_from_H}
C_N(\bm{b})=H_N(\bm{b})J_N.
\end{equation}
This representation is equivalent to standard circular convolution, but it has two technical advantages for our purpose. The introduction of the reversal matrix $J_N$ plays a crucial role not only in enabling the recursive circuit construction discussed next, but also in ensuring the Hermiticity of the operator for real-valued kernels. As we will show later in Sec.~\ref{sec:blockencoding_qsvt}, this Hermiticity provides a direct Hermitian route for QSVT.

Henceforth, we restrict the discussion to dimensions of the form $N = 2^n$ and identify the vector space $\mathbb{C}^N$ with the $n$-qubit Hilbert space $(\mathbb{C}^2)^{\otimes n}$. In this notation we write
\begin{equation}
J_n := J_N,
\end{equation}
so Eq.~\eqref{eq:C_from_H} becomes
\begin{equation}
C_n(\bm{b})=H_n(\bm{b})J_n.
\end{equation}
By definition, the action of the left-regular representation $L_{i,n}$ on the computational basis states $\{\ket{k}\}_{k=0}^{N-1}$ is given by
\begin{equation}\label{eq:L_action_on_ket}
    L_{i,n} \ket{k} = \ket{k+i \pmod N}
\end{equation}

\subsection{Reversal Operators}
A reversal operator $J_n$ is defined by its action on the computational basis:
\begin{equation}\label{eq:J_action_on_ket}
    J_n \ket{s} := \ket{2^n-1-s}, \quad \text{for } s=0,\dots,2^n-1.
\end{equation}
Equivalently, in matrix form this operator is the anti-identity permutation; under the little-endian convention used here, it is implemented simply as a bitwise NOT layer.
Throughout this work, we use little-endian integer indexing:
\begin{equation}
k=\sum_{j=0}^{n-1}k_j2^j,\qquad k_j\in\{0,1\},
\end{equation}
where $k_0$ is the least-significant bit (LSB) and the basis label is written as $\ket{k_{n-1}\cdots k_0}$.
The action of $J_n$ on this state is
\begin{equation}
J_n\ket{k_{n-1}\cdots k_0}
=\ket{2^n-1-k}
= X^{\otimes n}\ket{k_{n-1}\cdots k_0}.
\end{equation}
Consequently, the reversal matrix can be expressed as the $n$-fold tensor product of the Pauli-$X$ gate,
\begin{equation}
J_n = X^{\otimes n},
\end{equation}
and therefore $J_n^2=I$.
For convenience, we introduce the reflected shift operator as
\begin{equation}
     \widetilde{L}_{i,n} := L_{i,n} J_n.
\end{equation}
Substituting this into Eq.~\eqref{eq:LCU_form_of_convolution}, the symmetrized convolution operator becomes
\begin{equation}
     H_n(\bm{b}) = \sum_{i=0}^{2^n-1} b_i \, \widetilde{L}_{i,n}.\label{eq:convolution_operator}
\end{equation}

\subsection{Reflected Generator and Its Structural Recursion}
\label{sec:recursion}
We introduce the reflected generator
\begin{equation}\label{eq:u_n}
    U_n := L_{1,n}J_n = \widetilde{L}_{1,n}.
\end{equation}
Its role is structural: it provides a compact normal form for the $J_n$-symmetrized reflected-shift family, rather than the preferred arithmetic implementation of the unit incrementer $L_{1,n}$.

\begin{proposition}[Structural recursion of $U_n$]
\label{prop:Un-recursion}
The reflected generator satisfies
\begin{equation}
    U_{n+1}
    =
    U_n\otimes\ket{0}\!\bra{0}
    +
    J_n\otimes\ket{1}\!\bra{1},
    \label{eq:Un-recursion}
\end{equation}
with base case $U_1=I$.
\end{proposition}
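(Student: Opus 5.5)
The plan is to verify Eq.~\eqref{eq:Un-recursion} directly on the computational basis. Both sides are linear, so it is enough that they agree on every basis state $\ket{k}$ of $n+1$ qubits. The first step is to find a closed form for $U_n$. Combining Eq.~\eqref{eq:J_action_on_ket} with Eq.~\eqref{eq:L_action_on_ket} gives
\begin{equation*}
U_n\ket{s}=L_{1,n}\ket{2^n-1-s}=\ket{2^n-s \pmod{2^n}}=\ket{-s \bmod 2^n}.
\end{equation*}
So the reflected generator is simply modular negation. The base case is immediate: modulo $2$ we have $-s\equiv s$, hence $U_1=I$.

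For the inductive identity I will fix the tensor-ordering convention first, since this is the only real point of care. Under the little-endian labeling $\ket{k_{n}\cdots k_0}$ used in the paper, the rightmost tensor factor in $U_n\otimes\ket{0}\!\bra{0}$ is the least-significant qubit $k_0$. I therefore write a basis state of $n+1$ qubits as $\ket{k}=\ket{k'}\otimes\ket{k_0}$ with $k=2k'+k_0$ and $k'\in\{0,\dots,2^n-1\}$. The right-hand side of Eq.~\eqref{eq:Un-recursion} then sends $\ket{k'}\ket{0}\mapsto (U_n\ket{k'})\ket{0}$ and $\ket{k'}\ket{1}\mapsto (J_n\ket{k'})\ket{1}$. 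It remains to check that $U_{n+1}$ does the same, splitting on the value of $k_0$.

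In the case $k_0=0$, we have $-(2k')\bmod 2^{n+1}=2\,\big((-k')\bmod 2^n\big)$. This number is even, and its upper $n$ bits encode $(-k')\bmod 2^n$, so $U_{n+1}\ket{k'}\ket{0}=(U_n\ket{k'})\ket{0}$. In the case $k_0=1$, the point is that $2k'+1\neq 0$, so no wraparound occurs:
\begin{equation*}
-(2k'+1)\bmod 2^{n+1}=2^{n+1}-2k'-1=2(2^n-1-k')+1 .
\end{equation*}
This number is odd, and its upper bits encode $2^n-1-k'$, so $U_{n+1}\ket{k'}\ket{1}=(J_n\ket{k'})\ket{1}$. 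The two cases together give Eq.~\eqref{eq:Un-recursion}.

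I expect no substantive obstacle. The only places where an error could creep in are the bookkeeping of the modular reduction in the $k_0=0$ case (including $k'=0$, which maps to $0$ consistently) and the consistency between the operator tensor ordering and the little-endian convention. I will state that ordering explicitly before the case split.
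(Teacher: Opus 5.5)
Your proof is correct and follows essentially the same route as the paper's (the appendix lemma on the recursive form of $U_n$). Both arguments use the closed form $U_n\ket{k}=\ket{2^n-k \bmod 2^n}$ and verify the identity on basis states $\ket{k'}\ket{k_0}$ by splitting on the least-significant bit; the paper applies $J_{n+1}$ and then $L_{1,n+1}$ explicitly in each case, while you reduce directly via modular negation, and your explicit handling of the $k'=0$ wraparound in the even case is slightly more careful than the paper's write-up.
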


The proof and expanded projector form of $U_n$ are deferred to Appendix~\ref{app:Un_recursion}. This recursion should be interpreted as an operator-level structural relation. The efficient compiled realization used later instead follows the standard binary carry-propagation recursion of the increment operator.
Another consequence of this structural recursion is a sharp constraint on the Pauli support of the reflected-shift family; see Appendix~\ref{app:pauli_support}.

Using $L_{1,n}=U_nJ_n$, higher shifts obey
\begin{equation}
    L_{i,n}=(U_nJ_n)^i,
    \qquad
    \widetilde{L}_{i,n}=(U_nJ_n)^iJ_n.
\end{equation}
For binary synthesis we use
\begin{equation}
    L_{2^m,n}
    =
    L_{1,n-m}\otimes I^{\otimes m}
    =
    \left(U_{n-m}J_{n-m}\right)\otimes I^{\otimes m},
\label{eq:binary_synthesis}
\end{equation}
together with the binary expansion $i=\sum_{m=0}^{n-1} i_m2^m$, which underlies the circuit synthesis developed later.

\subsection{Optional Pauli-String Baseline}
\label{subsec:pauli_brief}
For completeness, one may also expand the reflected shifts $\widetilde{L}_{i,n}$ into Pauli strings and apply standard Pauli-LCU synthesis. This route is compatible with the operator structure developed in Sec.~\ref{sec:quantum_representation_of_convolution}, but it is not the focus of this work. The reflected-shift family has Pauli support contained in $\{I,X\}\otimes\{I,X,Z\}^{\otimes(n-1)}$, so no Pauli-$Y$ terms appear and the support size is at most $2\cdot 3^{n-1}$; see Appendix~\ref{app:pauli_support}. Even with this sharper bound, the support still grows exponentially with $n$, so coefficient enumeration and preprocessing can dominate the end-to-end cost. Accordingly, the main text emphasizes the direct modular adder compilation of the $\mathrm{SELECT}_{\widetilde{L}}$ block and uses Pauli-LCU only as an optional reference baseline.

\section{Quantum Circuit Construction}
\label{sec:circuits}

Following Theorem~\ref{thm:asym_modadder_conv}, this section turns the modular adder/SELECT equivalence into an explicit convolution circuit. We first present the direct recursive realization of $\mathrm{SELECT}_{\widetilde{L}}$ as a structural normal form that makes the role of $J_n$ and the reflected-generator algebra transparent. This main-text recursion is explanatory rather than gate-optimal. The stronger practical resource bounds used later come from the exactly equivalent compiled carry-propagation realization collected in Appendix~\ref{app:compiled_suffix_recursion}. We then show how standard QFT and ripple-carry adders realize the same underlying $\mathrm{SELECT}_L$ block.

\subsection{Implementation via Explicit Recursion of the Reflected Shift Operator}

The reflected generator $U_n$ is introduced in Sec.~\ref{sec:recursion}, while its detailed recursive derivation is given in Appendix~\ref{app:Un_recursion}. From Eq.~\eqref{eq:u_n}, we have $L_{1,n}=U_nJ_n$.
Operationally, this product acts as the unit cyclic incrementer, and powers of this block generate all shifts needed for the convolution, i.e., $L_{i,n}=(L_{1,n})^i$.
The full convolution operation is realized using the LCU framework. 
For a system of $n$ data qubits (dimension $N=2^n$), we introduce $n$ ancilla qubits to encode the kernel $\ket{\bm{b}}$.
The convolution operator is constructed as a superposition of the reflected shift operators as in Eq.~\eqref{eq:convolution_operator}:
\begin{equation}
    H_n(\bm{b}) = \sum_{i=0}^{2^n-1} b_i\, \widetilde{L}_{i,n} = \sum_{i=0}^{2^n-1} b_i \left( U_n J_n \right)^i J_n.
\label{eq:convolution_op}
\end{equation}
A literal implementation of Eq.~\eqref{eq:convolution_op} would repeat the unit incrementer $i$ times, requiring $\mathcal{O}(i\cdot \mathrm{poly}(n))$ gates in the worst case.

\begin{remark}[Polynomial synthesis of $\widetilde{L}_{i,n}$ within the recursive framework]
\label{rem:poly_select_construction}
To keep the construction in polynomial complexity of $n$, we write
\begin{equation}
i=\sum_{m=0}^{n-1} i_m 2^m,\qquad i_m\in\{0,1\}.
\end{equation}
For each bit position $m$, define the block shift
\begin{equation}
L_{2^m,n}=L_{1,n-m}\otimes I^{\otimes m}
=\left(U_{n-m}J_{n-m}\right)\otimes I^{\otimes m},
\label{eq:pow2_shift}
\end{equation}
which increments the subregister of size $(n-m)$ and leaves the $m$ LSB unchanged. Then
\begin{equation}
L_{i,n}
=\prod_{m=0}^{n-1}\left[L_{2^m,n}\right]^{i_m}
=\prod_{m=0}^{n-1}\left[\left(U_{n-m}J_{n-m}\right)\otimes I^{\otimes m}\right]^{i_m},
\label{eq:binary_shift_decomp}
\end{equation}
yielding the synthesis of the reflected shift operator:
\begin{equation}
  \widetilde{L}_{i,n}= \Big(\Pi_{m=0}^{n-1} [L_{2^m,n}]^{i_m} \Big)J_n.
\end{equation}

\end{remark}

Using the binary decomposition in Eq.~\eqref{eq:binary_shift_decomp}, we define the reflected-shift block by
\begin{equation}
\mathrm{SELECT}_{\widetilde{L}}
:=
\left(\prod_{m=0}^{n-1}W_{A_m}[L_{2^m,n}]\right)(I_A\otimes J_n),
\label{eq:select_tildeL_coherent}
\end{equation}
where we use the controlled-operation notation $W_q[\cdot]$ from Eq.~\eqref{eq:controlled_W_def}. 
The corresponding LCU unitary is then
\begin{equation}
\mathcal{U}_{\mathrm{LCU}}^{(\widetilde{L})}
=
(\mathrm{PREP}_u^\dagger\otimes I)\,\mathrm{SELECT}_{\widetilde{L}}\,(\mathrm{PREP}_b\otimes I).
\label{eq:coherent_select_pipeline}
\end{equation}
Notice that the product of controlled block shifts physically synthesizes the abstract $\mathrm{SELECT}_L$ operator defined in Theorem~\ref{thm:asym_modadder_conv}.
Thus, we have the exact operator equivalence:
\begin{equation}
\mathrm{SELECT}_{\widetilde{L}}
=\mathrm{SELECT}_{L}\,(I_{A}\otimes J_n),
 \label{eq:select_bridge}
\end{equation}
where subscript $A$ denotes the ancilla/index register. 
Thus the two SELECT operators in Eq.~\eqref{eq:select_bridge} differ only by one data-register layer $J_n=X^{\otimes n}$.
In the application of $\mathrm{SELECT}_{\widetilde{L}}$, Eq.~\eqref{eq:select_tildeL_coherent} indicates $J_n$ first acts on the data register, followed by the controlled block shifts $W_A$.
From Eq.~\eqref{eq:C_from_H}, the compiled synthesis for $H_n(\bm{b})$ also implements the standard convolution operator by one additional input-side $J_n$, equivalently by preparing $\ket{\bm{a}_{\mathrm{R}}}=J_n\ket{\bm{a}}$ as the data input.
Hence the central $\mathrm{SELECT}_{\widetilde{L}}$ is synthesized as a single unitary block using resources polynomial in $n$, making the LCU operator structure explicit without instantiating the $N$ reflected shifts one by one (see Remark~\ref{rem:poly_select_construction}).

Figure~\ref{fig:qc_lcu} summarizes the overall reflected-shift convolution pipeline. Its only realization-dependent ingredient is the controlled block shift $W_{A_m}[L_{2^m,n}]$. The underlying reflected-generator recursion from which the direct realization is derived is given in Appendix~\ref{app:Un_recursion}, while the two explicit recursive realizations of the block itself are recorded in Appendix~\ref{app:shiftblock_recursions}.

\begin{algorithm}[H]
    \caption{Quantum convolution via $\mathrm{SELECT}_{\widetilde{L}}$}
    \label{alg:quantum_conv}
    \begin{algorithmic}[1]
        \renewcommand{\algorithmicrequire}{\textbf{Require:}}
        \renewcommand{\algorithmicensure}{\textbf{Ensure:}}

        \Require Registers $D$ and $A$, each with $n$ qubits; $N=2^n$
        \Statex \hspace{\algorithmicindent}Unitaries $\mathrm{PREP}_{a},\mathrm{PREP}_b,\mathrm{PREP}_u$
        \Ensure Upon postselection $A=\ket{0}^{\otimes n}$, register $D$ is proportional to $H_n(\bm{b})\ket{\bm{a}}=C_n(\bm{b})\ket{\bm{a}_{\mathrm{R}}}$

        \State Initialize $A\leftarrow\ket{0}^{\otimes n}$ and $D\leftarrow\ket{0}^{\otimes n}$
        \State Apply $\mathrm{PREP}_{a}$ on $D$ and $\mathrm{PREP}_b$ on $A$
        \State Apply $J_n$ on $D$
        \For{$m = 0$ \textbf{to} $n-1$}
            \State Apply $W_{A_m}[L_{2^m,n}]$ on $D$
        \EndFor
        \State Apply $\mathrm{PREP}_u^\dagger$ on $A$
        \State Measure $A$
        \If{outcome is $\ket{0}^{\otimes n}$}
            \State \textbf{Success:} output register $D$
        \Else
            \State \textbf{Failure:} repeat (or use amplitude amplification)
        \EndIf
    \end{algorithmic}
\end{algorithm}

\begin{figure}[htbp]
\centering
\resizebox{\linewidth}{!}{\input{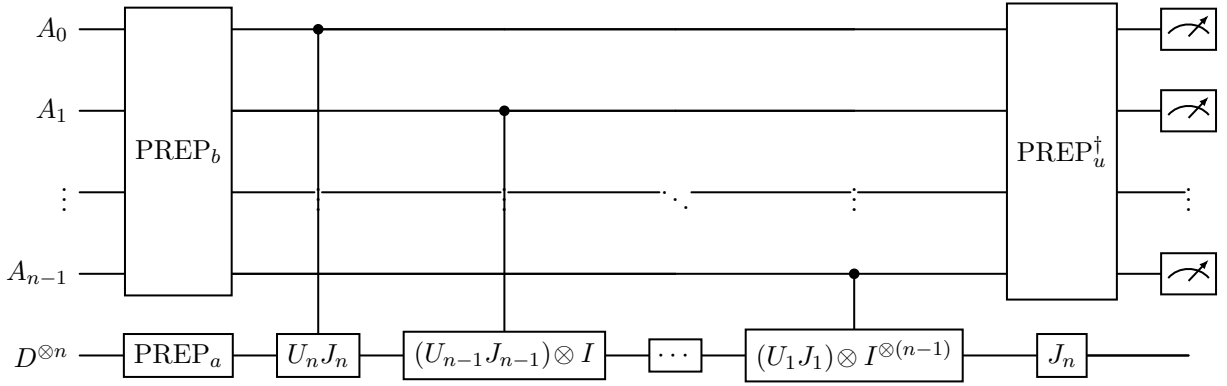}}
\caption{General $n$-qubit recursive LCU circuit (schematic). The middle block is a $\mathrm{SELECT}_{\widetilde{L}}$ in the $U_n$-nesting form; the first and last controlled blocks are shown explicitly, and $\cdots$ denotes the intermediate ones. The ancilla is uncomputed by applying $\mathrm{PREP}_u^\dagger$.}
\label{fig:qc_lcu}
\end{figure}

\subsection{QFT-Adder and Ripple-Carry Adder Realizations: Equivalence and Trade-Offs}
\label{subsec:qft_equiv}

Standard quantum adder libraries are typically designed to implement the standard shift $\mathrm{SELECT}_L$.
According to Eq.~\eqref{eq:select_bridge}, any such synthesis can be directly converted into an exact realization of $\mathrm{SELECT}_{\widetilde{L}}$ by composition with $(I_A\otimes J_n)$. 
In terms of quantum circuits, this simply requires applying the $J_n$ operator to the data register prior to the $\mathrm{SELECT}_L$ sequence.
Therefore, implementing $\mathrm{SELECT}_{\widetilde{L}}$ is equivalent to implementing standard modular addition, up to the negligible cost of one layer of $X$ gates.

Following Lemma~\ref{lem:adder_independence} and Theorem~\ref{thm:asym_modadder_conv}, any exact quantum circuit for $\mathrm{ADD}_N=\mathrm{SELECT}_L$ provides a valid realization of the standard shift block. Consequently, via Eq.~\eqref{eq:select_bridge}, it also synthesizes our target operator $\mathrm{SELECT}_{\widetilde{L}}$.
To illustrate this modularity, we present two canonical instantiations using a QFT adder (Fig.~\ref{fig:qc_qft_adder_conv}) and a ripple-carry adder (Fig.~\ref{fig:qc_ripple_adder_conv}).

\myheading{QFT adder implementation}
The exact $\mathrm{SELECT}_L$ operator established in Theorem~\ref{thm:asym_modadder_conv} can also be synthesized in the Fourier basis.
It is important to emphasize that the QFT is employed here solely to realize the modular addition function within the LCU $\mathrm{SELECT}_L$ block.
The overall algorithm remains a real-domain LCU construction, which is fundamentally distinct from implementing the convolution itself via the Fourier-domain convolution theorem.

Let $N=2^n$, $\omega_N:=\mathrm{e}^{2\pi \mathrm{i}/N}$.
The QFT over $\mathbb{Z}/N\mathbb{Z}$ is defined as:
\begin{equation}
\mathrm{QFT}_N\ket{k}
=\frac{1}{\sqrt N}\sum_{t=0}^{N-1}\omega_N^{kt}\ket{t}.
\end{equation}
In Fourier basis, each cyclic shift is diagonalized as
\begin{equation}
L_{i,n}
=
\mathrm{QFT}_N^\dagger ~D_i ~\mathrm{QFT}_N,
\qquad
D_i:=\sum_{t=0}^{N-1}\omega_N^{it}\ket{t}\!\bra{t}.
\end{equation}
Hence $\mathrm{SELECT}_L$ can be written as
\begin{equation}
\mathrm{SELECT}_L
=
(I_A\otimes \mathrm{QFT}_N^\dagger)\,
\Phi_{A\to D}\,
(I_A\otimes \mathrm{QFT}_N),
\end{equation}
where $\Phi_{A\to D}$ is the controlled diagonal phase network
\begin{equation}
\Phi_{A\to D}
=
\sum_{i=0}^{N-1}\ket{i}\!\bra{i}_A\otimes D_i
=
\sum_{i=0}^{N-1}\sum_{t=0}^{N-1}\omega_N^{it}\ket{i}\!\bra{i}_A\otimes\ket{t}\!\bra{t}_D,
\end{equation}
equivalently satisfying
\begin{equation}
\Phi_{A\to D}\ket{i}_A\ket{t}_D
=
\omega_N^{it}\ket{i}_A\ket{t}_D.
\end{equation}
This sequence exactly yields
\begin{equation}
\ket{i}\ket{k}\mapsto\ket{i}\ket{k+i \!\!\!\pmod N},
\end{equation}
confirming that the QFT-based construction is an exact realization of $\mathrm{SELECT}_L$.

\begin{figure}[t]
\centering
\resizebox{\linewidth}{!}{%
\input{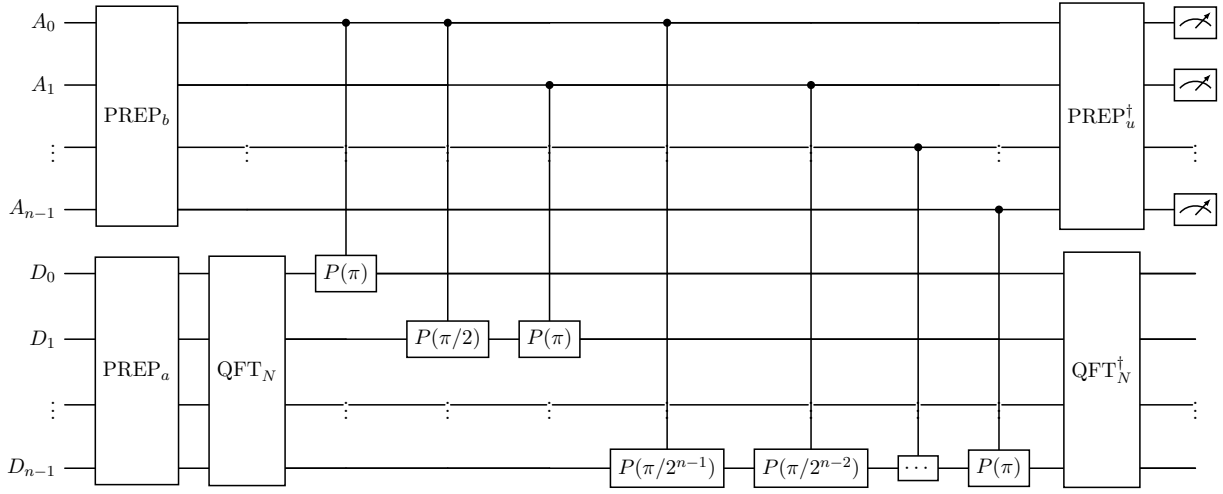}
}
\caption{General $n$-qubit asymmetric-LCU convolution circuit with a QFT adder realization of $\mathrm{SELECT}_L$. The middle block is implemented as $\mathrm{QFT}_N\rightarrow \Phi_{A\to D}\rightarrow \mathrm{QFT}_N^\dagger$ on the data register $D$, with control from ancilla/index register $A$.}
\label{fig:qc_qft_adder_conv}
\end{figure}

\myheading{Ripple-carry implementation}
Alternatively, $\mathrm{SELECT}_L$ can be implemented directly with a ripple-carry modular adder:
\begin{equation}
\mathrm{ADD}^{\mathrm{rc}}_N\ket{i}\ket{k}
=
\ket{i}\ket{k+i \!\!\!\pmod N}.
\end{equation}
By employing standard exact ripple-carry architectures (e.g., Cuccaro/Takahashi variants~\cite{Cuccaro2004,Takahashi2010}), the adder core uses linear gate count/depth.
Under our abstract logical-gate model, this yields an adder complexity of $\mathcal{O}(n)=\mathcal{O}(\log N)$.
Consequently, replacing the central $\mathrm{SELECT}_L$ block by a ripple-carry modular adder synthesis yields an LCU convolution circuit with an adder complexity that scales strictly linearly with $n$. 

\begin{figure}[t]
\centering
\resizebox{0.93\linewidth}{!}{%
\input{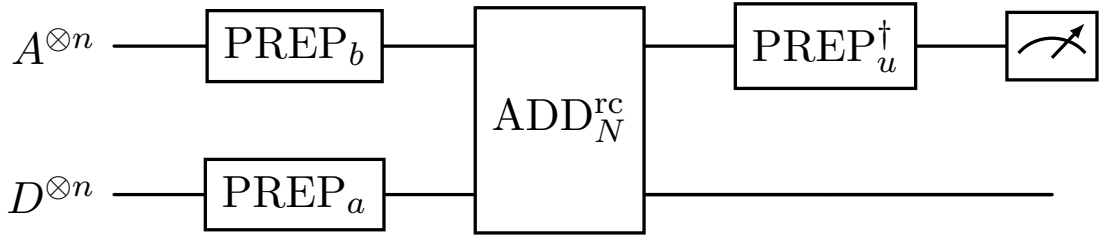}
}
\caption{Asymmetric-LCU convolution circuit with a ripple-carry adder realization of $\mathrm{SELECT}_L$. The middle block implements $\mathrm{ADD}^{\mathrm{rc}}_N:(i,k)\mapsto(i,k+i \bmod N)$. Ancillary carry wires required by a specific ripple-carry variant are omitted for schematic clarity.}
\label{fig:qc_ripple_adder_conv}
\end{figure}

\paragraph{Equivalence and adder choice.}
Both QFT and ripple-carry constructions realize the identical $\mathrm{SELECT}_L$ core.
As established in Eq.~\eqref{eq:select_bridge}, this core maps to the target $\mathrm{SELECT}_{\widetilde{L}}$ via composition with $(I_{A}\otimes J_n)$, which physically corresponds to applying $J_n$ to the data register prior to the adder sequence. 
Consequently, QFT adder, ripple-carry adder, and bitwise recursive compilation are interchangeable, differing only in their compilation trade-offs.
This modularity allows one to select any modular adder optimized for a specific hardware architecture.
Recent advances in this area include logarithmic-depth lookahead-style adders and sublinear-depth carry-save modular addition architectures~\cite{Gaur2024LogDepthAdder,Mitra2025QCSA,Remaud2025AncillaFree,Wang2025}.
In this work, we keep QFT adder and ripple-carry adder as canonical examples.

\section{Complexity Analysis}
\label{sec:complexity}
The complexity discussion below distinguishes between two recursive realizations of the same target unitary: the direct recursive construction used in the main text to expose the operator structure, and the exactly equivalent gate-optimized bitwise recursive compilation constructed in Appendix~\ref{app:compiled_suffix_recursion} and identified block-by-block with the same target in Proposition~\ref{prop:exact_equiv}. The direct recursive construction is structurally transparent but not gate-optimal, whereas the compiled bitwise recursive realization preserves the same unitary action while providing a more efficient gate-level realization.

We evaluate the performance of the direct recursive construction and the compiled bitwise recursive realization in comparison with classical fast Fourier transform (FFT)-based convolution algorithms, as well as the quantum LCU and Fourier-domain approaches discussed by Castelazo \textit{et al.}~\cite{Castelazo2022}. Here, $N = 2^n$ denotes the number of data points. Throughout this section, the relevant distinction is the input model. If the data and kernel are already available as coherent quantum states, the relevant cost is the circuit cost of the convolution block itself. If they are classical vectors, one must additionally include the state-preparation/loading overhead, denoted by $T_{\mathrm{load}}(N)$, which may dominate the total runtime.

\begin{remark}[Complexity of the direct recursive realization]
\label{rem:recursive_direct_complexity}
The main text recursion is primarily a structural construction. It gives a direct recursive synthesis of the reflected-shift family, but it is not intended as the most gate-efficient compilation. The larger cost of this direct realization is not caused by the single layer $J_n=X^{\otimes n}$ itself, whose gate cost is only linear in $n$. Rather, it comes from expanding the structural recursion of $U_n$ under nested controls: the recursion repeatedly generates pattern-controlled $J_i$ layers, and these controlled expansions dominate the macro block and primitive-gate counts.

Consider the block shift $L_{2^m,n}=L_{1,n-m}\otimes I^{\otimes m}$ and let $k:=n-m$ denote the size of the active suffix register. In the direct recursive realization, the expanded projector form of $U_k$ given in Appendix~\ref{app:Un_recursion} contains, for each $i=1,\dots,k-1$, a pattern-controlled layer $J_i=X^{\otimes i}$.
Thus the $i$-th recursive layer contributes $\mathcal{O}(i)$ macro blocks, and the total macro block cost of the block shift is
\begin{equation}
T_{\mathrm{rec}}^{\mathrm{blk}}(L_{2^m,n})
=
\sum_{i=1}^{k-1}\mathcal{O}(i)+\mathcal{O}(k)
=
\mathcal{O}(k^2)
=
\mathcal{O}((n-m)^2),
\end{equation}
where the final $\mathcal{O}(k)$ term comes from the $J_k$ layer in $L_{1,k}=U_kJ_k$.

If each controlled- or multi-controlled-$X$ gate with $r$ controls is further decomposed into primitive CNOT gates using a standard ancilla-assisted synthesis with $\mathcal{O}(r)$ CNOT cost, then the contribution of the $i$th recursive layer to the primitive CNOT gate count is $\mathcal{O}(i(k-i))$~\cite{Claudon2024,Rosa2025,Zindorf2025}.
Therefore
\begin{equation}
T_{\mathrm{rec}}^{\mathrm{CNOT}}(L_{2^m,n})
=
\sum_{i=1}^{k-1}\mathcal{O}(i(k-i))+\mathcal{O}(k)
=
\mathcal{O}(k^3)
=
\mathcal{O}((n-m)^3).
\end{equation}

Summing over all suffix sizes appearing in Eq.~\eqref{eq:select_tildeL_coherent} yields
\begin{equation}
T_{\mathrm{rec}}^{\mathrm{blk}}(\mathrm{SELECT}_{\widetilde{L}})
=
\sum_{k=1}^{n}\mathcal{O}(k^2)
=
\mathcal{O}(n^3),
\end{equation}
and
\begin{equation}
T_{\mathrm{rec}}^{\mathrm{CNOT}}(\mathrm{SELECT}_{\widetilde{L}})
=
\sum_{k=1}^{n}\mathcal{O}(k^3)
=
\mathcal{O}(n^4).
\end{equation}
Thus the direct recursive realization is cubic in the macro block model and quartic in primitive CNOT count.
\end{remark}

\begin{remark}[Equivalent optimized bitwise recursive compilation]
\label{rem:compiled_optimized_complexity}
Appendix~\ref{app:compiled_suffix_recursion} constructs the same reflected-shift unitary in a bitwise recursive carry-propagation form, and Proposition~\ref{prop:exact_equiv} proves that this compilation is exactly equivalent to the main text construction. This optimized realization should therefore be viewed not as a different algorithm, but as a more efficient gate-level realization of the same recursive construction.
In this sense, the optimized bitwise recursive compilation removes the overhead associated with explicitly expanding those nested controlled $J_i$ layers, while preserving exactly the same reflected-shift block.

For a suffix block of size $k=n-m$, the optimized compiled realization uses $\mathcal{O}(k)$ macro blocks. Under the same standard ancilla-assisted decomposition~\cite{Claudon2024,Rosa2025,Zindorf2025} with $\mathcal{O}(r)$ primitive CNOT cost for an $r$-controlled-$X$, its primitive CNOT cost is
\begin{equation}
\sum_{r=1}^{k}\mathcal{O}(r)=\mathcal{O}(k^2).
\end{equation}
Summing over all suffix sizes therefore gives
\begin{equation}
T_{\mathrm{cmp}}^{\mathrm{blk}}(\mathrm{SELECT}_{\widetilde{L}})
=
\sum_{k=1}^{n}\mathcal{O}(k)
=
\mathcal{O}(n^2),
\end{equation}
and
\begin{equation}
T_{\mathrm{cmp}}^{\mathrm{CNOT}}(\mathrm{SELECT}_{\widetilde{L}})
=
\sum_{k=1}^{n}\mathcal{O}(k^2)
=
\mathcal{O}(n^3).
\end{equation}
Hence the appendix compilation improves the direct recursive realization from $\mathcal{O}(n^3)$ to $\mathcal{O}(n^2)$ in macro block count, and from $\mathcal{O}(n^4)$ to $\mathcal{O}(n^3)$ in primitive CNOT count.
\end{remark}

We evaluate the performance of the constructions introduced in Secs.~\ref{sec:quantum_representation_of_convolution} and \ref{sec:circuits} against classical FFT-based convolution and prior quantum formulations~\cite{Castelazo2022}. 
Throughout this section, we assume $N=2^n$.

\myheading{Input models.}
We distinguish two practically relevant input settings throughout this section.
In the coherent quantum-state input model, both $\ket{\bm{a}}$ and $\ket{\bm{b}}$ are already available as quantum states, for example from an upstream quantum routine or direct hardware state preparation, so no loading cost is charged to the convolution subroutine itself.
In the classical-vector input model, both $\bm{a}$ and $\bm{b}$ are classical data and must be loaded/prepared, and we denote the total overhead of this step by $T_{\mathrm{load}}(N)$.
In this coherent-input setting, the circuit consumes the supplied states directly. Because the LCU construction is asymmetric, no additional kernel-dependent $\mathrm{PREP}_b$ oracle needs to be charged inside the convolution block itself.
This separation matches the main use case emphasized in this work: the circuit is most effective in the coherent quantum-state input model, whereas in the classical-vector input model the dominant additional cost is typically state preparation rather than the convolution block itself.

\myheading{Complexity-model conventions.}
Unless noted otherwise, Table~\ref{tab:complexity_comparison} reports arithmetic complexity for classical algorithms. For the modular adder LCU construction introduced here, the table intentionally suppresses the distinction among direct recursive, optimized bitwise recursive, QFT adder, and ripple-carry adder realizations and records only their shared polylogarithmic scaling in $N$; the more specific realization-dependent costs are stated explicitly later in this section. Furthermore, precision-dependent fault-tolerant synthesis overhead for arbitrary-angle rotations (e.g., Clifford+$T$ decomposition of QFT phases) is not included in the primitive-gate asymptotics listed in the table.

We distinguish three notions of cost in what follows: (i) deterministic per-shot circuit size of the convolution block; (ii) expected runtime after postselection, or after postselection plus amplitude amplification, which introduces an additional factor depending on $p_{\mathrm{succ}}$; and (iii) end-to-end runtime in the classical-vector input model, which additionally includes the loading cost $T_{\mathrm{load}}(N)$. Table~\ref{tab:complexity_comparison} records per-shot asymptotic circuit scaling, while the postselection and loading overheads are stated separately in the subsequent discussion and in the note below the table.

\myheading{Recursive construction and optimized compilation.}
Classical direct convolution uses $\mathcal{O}(N^2)$ arithmetic operations, while classical FFT-based convolution uses $\mathcal{O}(N\log N)$~\cite{Knuth1998,Cooley1965}. 
For the recursive reflected-shift construction, one application of the $\mathrm{SELECT}_{\widetilde{L}}$ block has direct structural cost $\mathcal{O}(\log^3 N)$ in macro blocks and $\mathcal{O}(\log^4 N)$ in primitive CNOT. 
However, the optimized bitwise recursive compilation (Appendix~\ref{app:compiled_suffix_recursion} and Proposition~\ref{prop:exact_equiv}) successfully reduces this cost to $\mathcal{O}(\log^2 N)$ in macro blocks and $\mathcal{O}(\log^3 N)$ in primitive CNOT count under standard decomposition assumptions~\cite{Claudon2024,Rosa2025,Zindorf2025}. 

Hence, in the coherent quantum-state input model, this optimized realization yields a per-shot runtime of $\mathcal{O}(\log^2 N)$ (macro block) or $\mathcal{O}(\log^3 N)$ (primitive CNOT). 
In the classical-vector input model, where data loading is required, the total runtime naturally becomes $\mathcal{O}(T_{\mathrm{load}}(N)+\log^2 N)$ or $\mathcal{O}(T_{\mathrm{load}}(N)+\log^3 N)$, respectively.

Since the LCU step is probabilistic, we must account for the success probability $p_{\mathrm{succ}}$. Using Remark~\ref{rem:ab_normalization}, this is given by
\begin{equation}
p_{\mathrm{succ}}
=\frac{\|C_n(\bm{b})\bm{a}\|_2^2}
{N\,\|\bm{a}\|_2^2\,\|\bm{b}\|_2^2}.
\end{equation}
Using Eq.~\eqref{eq:C_from_H}, this is equivalent to
\begin{equation}
p_{\mathrm{succ}}
=\frac{\|H_n(\bm{b})\bm{a}_{\mathrm{R}}\|_2^2}
{N\,\|\bm{a}\|_2^2\,\|\bm{b}\|_2^2},
\qquad \bm{a}_{\mathrm{R}}:=J_n\bm{a}.
\end{equation}
For the optimized compiled realization, amplitude amplification gives the expected runtime
\begin{equation}
\mathcal{O}\!\left(
\frac{T_{\mathrm{load}}(N)+\log^2 N}{\sqrt{p_{\mathrm{succ}}}}
\right)
\end{equation}
in the compiled macro block model, which reduces in the coherent-input setting to
\begin{equation}
\mathcal{O}\!\left(
\frac{\log^2 N}{\sqrt{p_{\mathrm{succ}}}}
\right).
\end{equation}
Under the primitive-CNOT model, the same compiled realization gives
\begin{equation}
\mathcal{O}\!\left(
\frac{T_{\mathrm{load}}(N)+\log^3 N}{\sqrt{p_{\mathrm{succ}}}}
\right)
\end{equation}
and, in the coherent-input setting,
\begin{equation}
\mathcal{O}\!\left(
\frac{\log^3 N}{\sqrt{p_{\mathrm{succ}}}}
\right).
\end{equation}
For alternative adder realizations of Theorem~\ref{thm:asym_modadder_conv}, the same expressions hold with the $\log^2 N$ or $\log^3 N$ term replaced by the corresponding realization cost.
In particular, the ripple-carry realization of Sec.~\ref{subsec:qft_equiv} gives $\mathcal{O}(\log N)$ primitive-gate adder complexity, while the exact QFT adder gives $\mathcal{O}(\log^2 N)$.
Thus all modular adder LCU realizations discussed in this work remain polylogarithmic in $N$, even though the precise exponent and cost model depend on the synthesis route.
As usual in amplitude-encoded algorithms, these quantum runtimes refer to preparing the output state or estimating observables from it; reconstructing the full classical output vector would require separate measurement/tomography overhead.

\myheading{Space / memory footprint.}
Classical convolution algorithms require storing $N$ complex numbers, corresponding to $\mathcal{O}(N)$ memory. 
When the inputs are already coherent quantum states, the circuit itself requires $\mathcal{O}(\log N)$ qubits overall, with the exact ancilla count depending on the chosen adder synthesis and on the decomposition of multi-controlled operations. This is the regime where amplitude encoding gives exponential compression of the in-circuit state representation. If the inputs start as classical vectors, one must additionally account for classical storage/loading resources; in this section those costs are summarized through the $T_{\mathrm{load}}(N)$ notation rather than repeated row-by-row in the table.

Table~\ref{tab:complexity_comparison} therefore reports the same methods in the two input models most relevant here:
(i) coherent quantum-state inputs, and
(ii) classical-vector inputs requiring state preparation/loading.
For the Fourier-domain approach of Ref.~\cite{Castelazo2022}, however, one must additionally assume oracle access to Fourier-domain kernel entries, or else charge the cost of constructing that oracle; we denote any such extra cost by $T_{\mathrm{F\mbox{-}oracle}}(N)$. We list the Fourier-domain approach separately because it is a distinct algorithmic mechanism that implements convolution in Fourier space. By contrast, the direct recursive construction, the optimized bitwise recursive compilation, and the QFT adder and ripple-carry adder syntheses are different realizations of the same LCU modular adder convolution block.

\begin{table*}[tbp]
\centering
\scriptsize
\caption{Summary asymptotic scaling for convolution on $N=2^n$ data points under the coherent-state and classical-vector input models. The classical rows report arithmetic runtime. The quantum rows report per-shot block cost in the stated oracle/input models. For the Fourier-domain row, any nontrivial cost of providing Fourier-domain kernel access is isolated into $T_{\mathrm{F\mbox{-}oracle}}(N)$; success-probability overhead and further caveats are stated separately in the text and note.}
\label{tab:complexity_comparison}
\begin{tabularx}{\textwidth}{@{} >{\raggedright\arraybackslash\hsize=1.1\hsize}X >{\centering\arraybackslash\hsize=0.95\hsize}X >{\centering\arraybackslash\hsize=1.15\hsize}X >{\centering\arraybackslash\hsize=0.8\hsize}X @{}}
\toprule
\textbf{Method}
& \textbf{Coherent Quantum-State Inputs}
& \textbf{Classical-Vector Inputs}
& \textbf{Circuit / Memory Footprint} \\
\midrule

Classical direct 
& --- 
& $\mathcal{O}(N^2)$ 
& $\mathcal{O}(N)$ \\
\addlinespace

Classical FFT 
& --- 
& $\mathcal{O}(N \log N)$ 
& $\mathcal{O}(N)$ \\
\addlinespace

Fourier-domain approach~\cite{Castelazo2022}
& $\mathcal{O}(T_{\mathrm{F\mbox{-}oracle}}(N)+\log^2 N)$
& $\mathcal{O}(T_{\mathrm{load}}(N)+T_{\mathrm{F\mbox{-}oracle}}(N)+\log^2 N)$
& $\mathcal{O}(\log N)$ qubits \\
\addlinespace

\textbf{LCU modular adder} (direct recursive, optimized bitwise recursive, QFT, ripple-carry)
& $\mathcal{O}(\operatorname{polylog}(N))$ 
& $\mathcal{O}\left(T_{\mathrm{load}}(N)+\operatorname{polylog}(N)\right)$
& $\mathcal{O}(\log N)$ qubits \\
\addlinespace

\bottomrule
\end{tabularx}
\begin{flushleft}
\scriptsize
\textit{Note:} \\
$\bullet$ $T_{\mathrm{load}}(N)$ denotes state-preparation/loading overhead and is absent when the relevant data input is already available as a coherent quantum state. \\
$\bullet$ $T_{\mathrm{F\mbox{-}oracle}}(N)$ denotes any additional preprocessing / oracle-construction cost required to provide Fourier-domain access to the kernel, as assumed in the Fourier-space block-encoding framework of Ref.~\cite{Castelazo2022}; when such oracle access is given analytically or by assumption, this term may be omitted. \\
$\bullet$ The table reports per-shot block costs before success-probability amplification overhead. 
For the modular adder LCU constructions, postselected runtimes acquire an additional factor $1/\sqrt{p_{\mathrm{succ}}}$. For the Fourier-domain approach, Ref.~\cite{Castelazo2022} likewise analyzes end-to-end output-state preparation via block-encoding and amplitude amplification; the table records only the underlying block cost. \\
$\bullet$ The unified LCU row suppresses realization-dependent details: the direct recursive realization has $\mathcal{O}(\log^3 N)$ macro block complexity and $\mathcal{O}(\log^4 N)$ primitive CNOT complexity, Appendix~\ref{app:compiled_suffix_recursion} provides an exactly equivalent optimized bitwise recursive compilation with $\mathcal{O}(\log^2 N)$ macro block complexity and $\mathcal{O}(\log^3 N)$ primitive CNOT complexity, and exact QFT adder and ripple-carry realizations have $\mathcal{O}(\log^2 N)$ and $\mathcal{O}(\log N)$ primitive logical-gate complexity, respectively.
\end{flushleft}
\end{table*}

\section{Block-Encoding and QSVT Implementation}
\label{sec:blockencoding_qsvt}
In this section, we show that the convolution operator $H_n(\bm{b})$ admits a direct block-encoding from the same asymmetric LCU pipeline used for circuit synthesis and is therefore naturally compatible with QSVT. For real-valued kernels, $H_n(\bm{b})$ is strictly Hermitian. Thus the $J_n$-symmetrized construction provides a native Hermitian representative of circular convolution on the original data register, up to the explicit relation $C_n(\bm{b})=H_n(\bm{b})J_n$. We use deconvolution as a concrete example. We do not claim an asymptotic advantage over direct non-Hermitian singular-value pseudoinversion or over Hermitian-dilation-based singular-value processing; rather, the main point is that the present reformulation is intrinsically Hermitian, and relative to the auxiliary normal-equation route it avoids condition-number squaring.

\subsection{Existence of a Block-Encoding}

\begin{corollary}[Direct asymmetric block-encoding of $H_n(\bm{b})$]
Let $N=2^n$ and assume the state-preparation unitary $\mathrm{PREP}_b$ acts on the $n$-qubit ancilla register $A$ to
\begin{equation}
\mathrm{PREP}_b\ket{0}^{\otimes n}
=\ket{\bm{b}}
=\sum_{i=0}^{N-1} b_i\ket{i},
\qquad
\sum_i |b_i|^2=1.
\end{equation}
Define the overall unitary $\mathcal{U}_H$ as
\begin{equation}
\mathcal U_H
:=
(\mathrm{PREP}_u^\dagger\otimes I)\,
\mathrm{SELECT}_{\widetilde{L}}\,
(\mathrm{PREP}_b\otimes I),
\end{equation}
where $\mathrm{PREP}_u\ket{0}^{\otimes n}=\ket{u}:=\frac{1}{\sqrt N}\sum_{i=0}^{N-1}\ket{i}$ and
\begin{equation}
\mathrm{SELECT}_{\widetilde{L}}
:=\sum_{i=0}^{N-1}\ket{i}\!\bra{i}\otimes \widetilde{L}_{i,n}.
\end{equation}
Then, projecting the ancilla register back to $\ket{0}_A^{\otimes n}$ yields
\begin{equation}
(\bra{0}^{\otimes n}\otimes I)\,\mathcal{U}_H\, 
(\ket{0}^{\otimes n}\otimes I)
=\frac{1}{\sqrt N}\,H_n(\bm{b}),
\end{equation}
where $H_n(\bm{b})=\sum_{i=0}^{N-1} b_i\widetilde{L}_{i,n}$.
\end{corollary}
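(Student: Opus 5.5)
The plan is to repeat the calculation in the proof of Theorem~\ref{thm:asym_modadder_conv}, with $\widetilde{L}_{i,n}$ in place of $L_{i,n}$. Two points need care. First, the statement is an operator identity: it holds for every data input, not only for a particular $\ket{\bm a}_D$. Second, the $\mathrm{SELECT}_{\widetilde{L}}$ defined here as $\sum_i\ket{i}\!\bra{i}\otimes\widetilde{L}_{i,n}$ must agree with the circuit form in Eq.~\eqref{eq:select_tildeL_coherent}. Then the corollary says something about the circuit actually built in Algorithm~\ref{alg:quantum_conv}.

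\emph{Step 1 (consistency of the two definitions).} Factor $\sum_i\ket{i}\!\bra{i}\otimes L_{i,n}J_n$ as $\mathrm{SELECT}_L(I_A\otimes J_n)$. By Lemma~\ref{lem:adder_independence}, $\mathrm{SELECT}_L=\mathrm{ADD}_N$. By Eq.~\eqref{eq:binary_shift_decomp}, on a basis state $\ket{i}_A$ the product $\prod_m W_{A_m}[L_{2^m,n}]$ applies $\prod_m L_{2^m,n}^{i_m}=L_{i,n}$. Here I use the composition law Eq.~\eqref{eq:L_combination}, under which all the $L$'s commute, so the ordering of the product does not matter. This gives Eq.~\eqref{eq:select_bridge}, and the two definitions coincide. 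I expect this to be the only step needing any thought. The point to check is that each controlled block $W_{A_m}$ acts only on the $D$ register and leaves $A$ diagonal in the computational basis, so linearity extends the basis-state computation to the whole space.

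\emph{Step 2 (block computation).} For an arbitrary $\ket{\psi}_D$, use $\bra{0}^{\otimes n}\mathrm{PREP}_u^\dagger=\bra{u}$ and $\mathrm{PREP}_b\ket{0}^{\otimes n}=\ket{\bm b}$. This reduces the block to $(\bra{u}\otimes I)\,\mathrm{SELECT}_{\widetilde{L}}\,(\ket{\bm b}\otimes\ket{\psi})$. Expanding gives
\begin{equation}
\sum_i b_i\braket{u}{i}\,\widetilde{L}_{i,n}\ket{\psi}=\frac{1}{\sqrt N}\sum_i b_i\widetilde{L}_{i,n}\ket{\psi},
\end{equation}
since $\braket{u}{i}=1/\sqrt N$ is real. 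The identity holds for every $\ket{\psi}$, so it holds as an operator identity on $D$, which is the claim.

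\emph{Step 3 (consistency check).} As a check, I would note that the result equals $\frac{1}{\sqrt N}C_n(\bm b)J_n$. This is Theorem~\ref{thm:asym_modadder_conv} composed with the input-side $J_n$ layer, consistent with Eq.~\eqref{eq:C_from_H}.
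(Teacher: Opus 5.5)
Your proposal is correct and follows the same route as the paper. Your Step~2 is exactly the paper's proof: substitute $\bra{u}$ for $\bra{0}^{\otimes n}\mathrm{PREP}_u^\dagger$ and $\ket{\bm b}$ for $\mathrm{PREP}_b\ket{0}^{\otimes n}$, expand $\mathrm{SELECT}_{\widetilde{L}}$, and use $\braket{u}{i}=1/\sqrt N$. Your Steps~1 and~3 are valid but optional: the corollary takes $\sum_i\ket{i}\!\bra{i}\otimes\widetilde{L}_{i,n}$ as its definition and leaves the link to the circuit form to Eq.~\eqref{eq:select_bridge} in the main text.
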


\begin{proof}
Using $(\bra{0}^{\otimes n}\mathrm{PREP}_u^\dagger)=\bra{u}$ and
$\ket{\bm{b}}=\sum_i b_i\ket{i}$, we have
\begin{equation}
\begin{split}
(\bra{0}^{\otimes n}_A\otimes I_D)\mathcal U_H(\ket{0}_A^{\otimes n}\otimes I_D)
&=(\bra{u}_A\otimes I_D)\,\mathrm{SELECT}_{\widetilde{L}}\,(\ket{\bm{b}}_A\otimes I_D) \\
&=\frac{1}{\sqrt N}\sum_{i=0}^{N-1} b_i\,\widetilde{L}_{i,n}
=\frac{1}{\sqrt N}\,H_n(\bm{b}).
    \end{split}
\end{equation}
\end{proof}

Consequently, $\mathcal U_H$ is an $(\alpha,n,0)$-block-encoding of $H_n(\bm b)$ with $\alpha=\sqrt N\,\|\bm b\|_2$, which reduces to $\alpha=\sqrt N$ for normalized kernels.
For an input state $\ket{\psi}$, the corresponding success probability of the postselection measurement is
\begin{equation}
p_{\mathrm{succ}} = \frac{\|H_n(\bm{b})\ket{\psi}\|_2^2}
{N\,\|\bm{b}\|_2^2}.
\end{equation}

\subsection{Hermitian Structure for Real Kernels}

A key structural feature of our proposed construction is the symmetrizing role of the reversal operator $J_n$.

\begin{lemma}[Hermiticity for real kernels]\label{lemma:hermiticity}
If $\bm{b} \in \mathbb{R}^{2^n}$, then
\begin{equation}
H_n(\bm{b})^\dagger = H_n(\bm{b}).
\end{equation}
\end{lemma}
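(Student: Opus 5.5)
The plan is to show that each reflected shift $\widetilde{L}_{i,n}=L_{i,n}J_n$ is itself Hermitian. The lemma then follows by linearity, because real coefficients satisfy $\bar b_i=b_i$. Concretely,
\begin{equation}
H_n(\bm b)^\dagger=\sum_{i=0}^{N-1}\bar b_i\,\widetilde{L}_{i,n}^\dagger=\sum_{i=0}^{N-1} b_i\,\widetilde{L}_{i,n}^\dagger ,
\end{equation}
so it suffices to prove $\widetilde{L}_{i,n}^\dagger=\widetilde{L}_{i,n}$ for every $i$.

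Since $\widetilde{L}_{i,n}$ is a real permutation matrix, its adjoint is its transpose, which is the inverse permutation. Hermiticity is therefore equivalent to $\widetilde{L}_{i,n}$ being an involution. I will verify this directly on computational basis states, using Eq.~\eqref{eq:L_action_on_ket} and Eq.~\eqref{eq:J_action_on_ket}. Working modulo $N$, $J_n\ket{s}=\ket{N-1-s}$, so
\begin{equation}
\widetilde{L}_{i,n}\ket{s}=L_{i,n}\ket{N-1-s}=\ket{(i-1-s)\bmod N}.
\end{equation}
Applying the map twice sends $s\mapsto i-1-(i-1-s)=s \pmod N$, hence $\widetilde{L}_{i,n}^2=I$. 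A real permutation matrix that is an involution is symmetric, so $\widetilde{L}_{i,n}^\dagger=\widetilde{L}_{i,n}^{-1}=\widetilde{L}_{i,n}$.

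As an alternative operator-level check, I can use the conjugation identity $J_nL_{i,n}J_n=L_{-i \bmod N,n}$, which holds because $J_n$ acts as $s\mapsto -1-s$ modulo $N$. Combined with $L_{i,n}^\dagger=L_{-i,n}$ from the composition law Eq.~\eqref{eq:L_combination} and $J_n^\dagger=J_n=X^{\otimes n}$, this gives
\begin{equation}
\widetilde{L}_{i,n}^\dagger=J_nL_{i,n}^\dagger=J_nL_{-i,n}=(J_nL_{-i,n}J_n)J_n=L_{i,n}J_n=\widetilde{L}_{i,n}.
\end{equation}

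The only step that needs care is the modular arithmetic in $J_n\ket{s}=\ket{-1-s \bmod N}$, in particular confirming that the reflection anti-commutes with translations in the sense $J_nL_iJ_n=L_{-i}$. Once that identity is in place, the rest of the argument is purely formal.
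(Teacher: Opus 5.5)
Your proposal is correct and essentially matches the paper's proof: the paper likewise shows each $\widetilde{L}_{i,n}$ is Hermitian via $L_{i,n}^\dagger=L_{-i,n}$ and $J_nL_{-i,n}J_n=L_{i,n}$ (exactly your operator-level chain), then concludes by linearity using $b_i^*=b_i$. Your basis-state check $s\mapsto(i-1-s)\bmod N$ is a useful addition, because it proves the conjugation identity that the paper only asserts.
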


\begin{proof}
Using $J_n=J_n^\dagger = J_n^{-1}$ and $L_{i,n}^\dagger = L_{-i,n}$.
A fundamental property of the reversal operator $J_n$ is its relation with the cyclic operators: $J_n L_{-i,n} J_n = L_{i,n}$, which implies that conjugating a shift by reversal operator inverts its direction.
Using its relation, we can compute the adjoint of the reflected shift operator $\widetilde{L}_{i,n}$ as follows: 
\begin{equation}
\widetilde{L}_{i,n}^\dagger
=(L_{i,n} J_n )^\dagger
=J_n L_{-i,n}
=L_{i,n} J_n
=\widetilde{L}_{i,n}.
\end{equation}
This confirms each reflected shift operator $\widetilde{L}_{i,n}$ is Hermitian.
For a real-valued kernel $\bm{b} \in \mathbb{R}^{2^n}$, each element satisfies $b_i^* =b_i$.
Consequently, the convolution operator $H_n(\bm{b})$ is also Hermitian.
\begin{equation}
H_n(\bm{b})^\dagger
=
\sum_i b_i^* \widetilde{L}_{i,n}^\dagger
=
\sum_i b_i \widetilde{L}_{i,n}=H_n(\bm{b}).
\end{equation}
\end{proof}

This Hermitian symmetry is not inherent to generic shift-based constructions and arises directly from the symmetrization by $J_n$.

\subsection{Hermitian Block-Encoding and QSVT for Deconvolution}

Given an $(\alpha,n,0)$-block-encoding of $H_n(\bm b)$, the QSVT framework enables the application of degree-$d$ polynomial transformations to the operator with a query complexity of $\mathcal{O}(d)$~\cite{Low2017,Gilyen2019,Low2019}. 
We focus on the deconvolution problem considered in~\cite{Castelazo2022} as a concrete example, which requires implementing (or approximating) the inverse of a convolution operator, in the broader spirit of quantum linear-systems and related inverse-type quantum algorithms~\cite{HHL2009,Childs2017,Chakraborty2019}.

\myheading{Direct non-Hermitian route and auxiliary Hermitianizations.} 
In the standard LCU framework for group convolution~\cite{Castelazo2022}, the convolution operator
\begin{equation}
C_n(\bm{b}) = \sum_i b_i L_{i,n}
\end{equation}
remains generally non-Hermitian, even for real-valued kernels.
Hereafter we abbreviate $C_n(\bm{b})$ as $C_n$.
Because $C_n$ is non-Hermitian, the natural block-encoding/QSVT framework acts on its singular values rather than directly implementing an arbitrary matrix function $f(C_n)$ on an eigenvalue spectrum~\cite{Gilyen2019}. In particular, singular-value processing of a non-Hermitian operator is not the same object as, for example, the matrix exponential $\mathrm{e}^{C_n}$.
For inverse / pseudoinverse-type tasks, however, direct non-Hermitian singular-value pseudoinversion is in principle available and can already achieve linear dependence on the effective condition number~\cite{Gilyen2019,Castelazo2022}.

The same singular-value viewpoint may be expressed through the standard Hermitian dilation
\begin{equation}
\mathcal{H}(C_n) =
\begin{pmatrix}
0 & C_n \\
C_n^\dagger & 0
\end{pmatrix}.
\end{equation}
This doubled-register Hermitian embedding has eigenvalues $\pm \sigma_i(C_n)$, so it encodes the same singular values as $C_n$. Accordingly, inverse / pseudoinverse-type singular-value processing via $\mathcal{H}(C_n)$ also retains linear dependence on $\kappa(C_n)$~\cite{Gilyen2019}.

A different auxiliary Hermitianization is the positive-semidefinite normal-equation operator $C_n^\dagger C_n$, for which the relevant spectral domain shifts from singular values $\sigma_i$ to eigenvalues $\lambda_i=\sigma_i^2$. Consequently, the condition number is squared:
\begin{equation}
\kappa(C_n^\dagger C_n)=\frac{\sigma_{\max}^2}{\sigma_{\min}^2}=\kappa(C_n)^2.
\end{equation}
Thus, if one chooses the normal-equation route as the comparison baseline, the required polynomial degree for inverse-type processing becomes
\begin{equation}
d(C_n^\dagger C_n)=\mathcal{O}\!\left(\kappa(C_n)^2\log(1/\varepsilon)\right),
\label{eq:qsvt_degree_normal_eq}
\end{equation}
where $\varepsilon$ denotes the allowable approximation error~\cite{Gilyen2019}. 
We emphasize that this quadratic $\kappa(C_n)^2$ dependence is specific to the auxiliary normal-equation route. It is not a fundamental limitation of direct non-Hermitian singular-value transformation, nor of Hermitian-dilation-based singular-value processing.

\myheading{Hermitian formulation via the reversal matrix.}
In contrast to the standard framework, our symmetrized operator $H_n(\bm{b})$ from Eq.~\eqref{eq:convolution_operator}, built from the reflected shifts $\widetilde{L}_{i,n}=L_{i,n}J_n$, is strictly Hermitian for any real-valued kernel $\bm{b}$ (Lemma~\ref{lemma:hermiticity}). 
QSVT therefore acts directly on its eigenvalues on the original $n$-qubit data space, without introducing a doubled dilation register and without passing through normal equations.

Leveraging this Hermitian property for deconvolution, the QSVT polynomial degree $d$ required to approximate $H_n(\bm{b})^{-1}$ scales as
\begin{equation}
d(H_n(\bm{b})) 
= \mathcal{O}\!\left(\kappa(H_n(\bm{b})) \log(1/\varepsilon)\right).
\label{eq:qsvt_degree_H}
\end{equation}
Relative to the normal-equation degree estimate in Eq.~\eqref{eq:qsvt_degree_normal_eq}, this native Hermitian formulation avoids condition-number squaring and therefore lowers the required polynomial degree to the linear-in-$\kappa$ scaling stated in Eq.~\eqref{eq:qsvt_degree_H}. By contrast, its $\kappa$-dependence matches the direct non-Hermitian singular-value route and the standard Hermitian dilation route for inverse / pseudoinverse-type processing. The point is therefore not a universal asymptotic improvement over all non-Hermitian methods, but rather that the $J_n$-symmetrized operator supplies a native Hermitian representative of the convolution family itself.

\begin{remark}[Assumptions behind inverse-QSVT degree estimates]
\label{rem:qsvt_inverse_assumptions}
The degree estimates in Eqs.~\eqref{eq:qsvt_degree_normal_eq} and \eqref{eq:qsvt_degree_H} rely on standard inversion promises:
\begin{enumerate}
\item The target operator is rescaled into a block-encoding domain (spectral norm at most $1$);
\item Inversion is required only on a promised nonzero spectral interval, e.g.
$[-1,-1/\kappa]\cup[1/\kappa,1]$ for the Hermitian operator $H_n(\bm{b})$, or equivalently on singular values in $[1/\kappa,1]$ for the non-Hermitian route;
\item Equivalently, if exact invertibility is absent, one uses a thresholded pseudoinverse with effective condition number set by the threshold.
\end{enumerate}
Under these assumptions, standard polynomial-approximation arguments for the inverse function $1/x$ on the promised interval yield the quoted $\mathcal{O}(\kappa\log(1/\varepsilon))$-type dependence for the direct singular-value route, for Hermitian dilation, and for the present native Hermitian formulation~\cite{Gilyen2019}. By contrast, the normal-equation comparison route incurs the squared $\mathcal{O}(\kappa^2\log(1/\varepsilon))$ dependence because its spectrum is $\sigma_i^2$.
\end{remark}

\myheading{Complexity comparison.}
Since $C_n$ and $H_n(\bm{b})$ share the same singular value spectrum (due to the unitarity of $J_n$), their condition numbers are identical:
\begin{equation}
\kappa(H_n(\bm{b})) = \kappa(C_n).
\end{equation}
\begin{proposition}[Deconvolution mapping between $C_n$ and $H_n$]
\label{prop:deconv_mapping}
From Eq.~\eqref{eq:C_from_H}, we have
\begin{equation}
C_n(\bm{b})=H_n(\bm{b})J_n.
\end{equation}
If $C_n(\bm{b})$ is invertible, then
\begin{equation}
C_n(\bm{b})^{-1}=J_nH_n(\bm{b})^{-1},
\end{equation}
Hence solving deconvolution for $C_n(\bm{b})$ is equivalent to solving it for the symmetrized operator $H_n(\bm{b})$ followed by a straightforward $J_n$ transformation.
\end{proposition}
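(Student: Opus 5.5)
The plan is to derive the inverse directly from the factorization $C_n(\bm{b})=H_n(\bm{b})J_n$ in Eq.~\eqref{eq:C_from_H}, using only that $J_n=X^{\otimes n}$ is unitary and self-inverse ($J_n^2=I$, $J_n^\dagger=J_n$). The first step is to establish that $H_n(\bm{b})$ is invertible whenever $C_n(\bm{b})$ is. Multiplying the factorization on the right by $J_n$ gives $H_n(\bm{b})=C_n(\bm{b})J_n$. This is a product of two invertible operators, hence invertible. Equivalently, $\det H_n(\bm{b})=\det C_n(\bm{b})\det J_n$ with $\det J_n=\pm1\neq 0$.

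The second step is to invert the product in the usual reversed order:
\begin{equation}
C_n(\bm{b})^{-1}=(H_n(\bm{b})J_n)^{-1}=J_n^{-1}H_n(\bm{b})^{-1}=J_nH_n(\bm{b})^{-1}.
\end{equation}
As a sanity check, I would verify both one-sided identities. First, $C_n(\bm{b})\,J_nH_n(\bm{b})^{-1}=H_n(\bm{b})J_n^2H_n(\bm{b})^{-1}=I$. Second, $J_nH_n(\bm{b})^{-1}\,C_n(\bm{b})=J_nH_n(\bm{b})^{-1}H_n(\bm{b})J_n=J_n^2=I$.

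Finally, I would spell out the operational meaning. To solve $C_n(\bm{b})\bm{a}=\bm{c}$, one first computes $\bm{x}=H_n(\bm{b})^{-1}\bm{c}$, for instance by QSVT on the Hermitian block-encoding when $\bm{b}$ is real (Lemma~\ref{lemma:hermiticity}). One then applies a single layer of $X$ gates to get $\bm{a}=J_n\bm{x}$; equivalently, $\bm{a}_{\mathrm{R}}=\bm{x}$. Since $J_n$ is unitary, the singular values and the condition number are unchanged, consistent with $\kappa(H_n)=\kappa(C_n)$.

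None of these steps is a real obstacle; the argument is elementary linear algebra. The only care needed is the order of factors when inverting the product: $J_n$ must appear on the left of $H_n(\bm{b})^{-1}$, matching the input-side placement of $J_n$ in $C_n=H_nJ_n$.
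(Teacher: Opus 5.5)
Your proposal is correct and follows the paper's own proof, which likewise inverts the product $H_n(\bm{b})J_n$ in reversed order and uses $J_n^{-1}=J_n$. Your explicit check that $H_n(\bm{b})=C_n(\bm{b})J_n$ is invertible fills in a step the paper leaves implicit.
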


\begin{proof}
Right-multiplying Eq.~\eqref{eq:C_from_H} by $J_n$ yields $C_n(\bm{b}) = H_n(\bm{b}) J_n$. 
Taking the inverse of both sides gives $C_n(\bm{b})^{-1} = (H_n(\bm{b}) J_n)^{-1} = J_n^{-1} H_n(\bm{b})^{-1} = J_n H_n(\bm{b})^{-1}$.
\end{proof}

For the normal-equation degree estimate in Eq.~\eqref{eq:qsvt_degree_normal_eq} and the native-Hermitian degree estimate in Eq.~\eqref{eq:qsvt_degree_H}, the polynomial-degree ratio is
\begin{equation}
\frac{d(C_n^\dagger C_n)}{d(H_n(\bm{b}))}
=
\Theta(\kappa(H_n(\bm{b}))).
\end{equation}
Thus the benefit of the present Hermitian reformulation is best understood as avoiding the extra factor of $\kappa$ introduced by this particular normal-equation comparison route, while also keeping the deconvolution map $C_n(\bm{b})^{-1}=J_nH_n(\bm{b})^{-1}$ explicit on the original data register.

\section{Discussion}
\myheading{Unifying view from Theorem~\ref{thm:asym_modadder_conv}.}
Theorem~\ref{thm:asym_modadder_conv} identifies a common core: modular addition on computational-basis indices is formally equivalent to realizing the asymmetric-LCU SELECT operation required for circular convolution. 
This immediately unifies circulant-matrix implementations~\cite{Wang2017}, related structured-matrix approaches such as Toeplitz systems~\cite{Wan2018}, and group-convolution LCU framework~\cite{Castelazo2022}.

\myheading{Adder choices and realization trade-offs.}
The direct recursive construction of the main text, its optimized bitwise compilation (Appendix~\ref{app:bitwise_compilation} and Proposition~\ref{prop:exact_equiv}), and standard QFT or ripple-carry adders all produce equivalent middle blocks for the target $\mathrm{SELECT}_{\widetilde{L}}$ (with QFT and ripple-carry routes mapped through Eq.~\eqref{eq:select_bridge}). 
Therefore, they should be viewed as interchangeable implementations of the same target unitary.
Among the architectures studied here, direct ripple-carry scales as $\mathcal{O}(n)$ in primitive logical gates, while the exact QFT adder scales as $\mathcal{O}(n^2)$.
For our recursive methods, the direct structural realization scales as $\mathcal{O}(n^3)$ in macro blocks and $\mathcal{O}(n^4)$ in primitive CNOT count, whereas the optimized bitwise compilation successfully reduces these costs to $\mathcal{O}(n^2)$ in macro blocks and $\mathcal{O}(n^3)$ in primitive CNOT count. 
The primary role of the direct recursion is therefore explanatory rather than gate-optimal: it explicitly exposes the shift algebra and the $J_n$ symmetrization, while the bitwise compilation provides a more efficient exact compilation of the same operator. 
Furthermore, modern optimized modular addition directions can also be used within the same theorem-level equivalence, including recent lookahead and carry-save variants~\cite{Gaur2024LogDepthAdder,Mitra2025QCSA,Remaud2025AncillaFree,Wang2025}.

\myheading{Contribution scope and complexity optimality.}
The direct recursive construction is not claimed to asymptotically outperform the best standalone modular adders. 
We also do not claim novelty for the identity $\mathrm{ADD}_N=\mathrm{SELECT}_L$ itself, nor for individual ripple-carry and QFT adder primitives. 
Rather, our contributions are threefold: (i) an explicit asymmetric-LCU formulation bridging modular addition and circular convolution, which inherently preserves complex amplitudes while rigorously accounting for normalization and success probabilities; 
(ii) the $J_n$-symmetrized operator pipeline and its associated Hermitian deconvolution mapping, and (iii) a recursion-first structural construction together with an exactly equivalent optimized bitwise compilation and a proof of its block-by-block equivalence with standard adder implementations.
Accordingly, the advantage claimed here is not reduced arithmetic complexity, nor a universal asymptotic improvement over all non-Hermitian inverse implementations.
Instead, the proposed framework offers improved theoretical analyzability, modular circuit compilation, and a direct Hermitian route for inverse-type spectral transforms via the $J_n$-symmetrized pipeline.

\myheading{Input model and practical efficiency.}
The proposed formulation is most relevant in practice in the regime where both the data and the kernel are already available as coherent quantum resources, such as from an upstream quantum routine or direct hardware state preparation. 
In this setting, the asymmetric construction offers a distinct advantage: the postselection state is fixed at $\ket{u}$, so the only uncompute operation inside the convolution block is $\mathrm{PREP}_u^\dagger$. 
The kernel enters solely through the supplied input ancilla state $\ket{\bm b}$, eliminating the need for a kernel-dependent inverse preparation $\mathrm{PREP}_b^\dagger$ during convolution. 
Furthermore, this asymmetry naturally preserves the complex coefficients $b_i$; a symmetric overlap would instead produce $|b_i|^2$ weights and erase their phases. 
This state-input paradigm should be distinguished from the oracle-access model of Ref.~\cite{Castelazo2022}.
In their model, while the data may also be coherent, the kernel components (either in the spatial or Fourier domain) must be queried individually as classical bit-strings via an oracle, rather than being supplied directly and coherently as the amplitude-encoded input state $\ket{\bm b}$. 

A representative use case is coherent deblurring or inverse filtering inside a larger quantum signal-processing pipeline: $\ket{\bm a}$ is produced by an upstream quantum simulation or quantum sensing subroutine, while $\ket{\bm b}$ encodes a translation-invariant point-spread or transfer kernel supplied coherently by an upstream preparation routine. 
In such a setting, the convolution block itself contributes only the modular adder cost, the asymmetric construction avoids charging a kernel-dependent $\mathrm{PREP}_b^\dagger$ inside the block, and the $J_n$-symmetrized formulation provides a direct Hermitian route for subsequent inverse-type spectral transforms such as deconvolution. 
By contrast, for fully classical input, the dominant practical bottleneck is typically the state preparation and loading cost $T_{\mathrm{load}}(N)$, rather than the convolution block.

\myheading{Role of $J_n$ and method choice.}
The role of $J_n$ is structural rather than cosmetic: it trades one inexpensive bitwise NOT layer for a formulation that exposes an explicit recursion, admits the Pauli-support characterization used in Appendix~\ref{app:pauli_support}, and yields a Hermitian operator for real-valued kernels. 
Accordingly, optimized modular adders are the natural choice when arithmetic efficiency is the primary goal, whereas the recursive-$J_n$ formulation is preferable when structural transparency and direct Hermitian spectral processing are the main priorities.

\section{Conclusion}
We have reformulated quantum circular convolution around a single circuit-level statement: discrete circular convolution is realized exactly as an asymmetric-LCU block when the controlled-shift unitary is implemented by modular addition (Theorem~\ref{thm:asym_modadder_conv}). This establishes a direct bridge between the circulant-adder-based circulant matrix implementations~\cite{Wang2017} and the broader group-convolution LCU perspective~\cite{Castelazo2022}.

Building upon this equivalence, we introduced the reversal matrix $J_n$ to construct $\widetilde{L}_{i,n}=L_{i,n}J_n$ and the corresponding convolution operator $H_n(\bm{b})=\sum_i b_i\widetilde{L}_{i,n}$. 
This symmetrized form yields explicit recursive structure, an optional Pauli-string expansion path, and, crucially, Hermiticity for real-valued kernels; for inverse-type spectral transforms, it therefore provides a direct Hermitian route, which we explicitly compared against the auxiliary normal-equation baseline in Sec.~\ref{sec:blockencoding_qsvt}.

We also provided a structurally transparent recursive construction paired with an exactly equivalent optimized bitwise compilation, and compared these implementations against standard QFT and ripple-carry adders under explicit cost-model conventions. 
The resulting framework gives a unified path from operator formulation to circuit synthesis.
It is particularly effective when the relevant operands are already available as coherent quantum states; for classical vectors, the dominant bottleneck typically lies in state preparation and loading rather than in the convolution circuit itself.

\section{Acknowledgments}

Hiroshi C. Watanabe, Norio Yoshida and Sergey Gusarov were supported by CSTIP grant from National Research Council Canada (AQC203).
Hiroshi C. Watanabe was supported by Council for Science, Technology and Innovation (CSTI), Cross-ministerial Strategic Innovation Promotion Program (SIP), “Promoting the application of advanced quantum technology platforms to social issues” (Funding agency:QST), JSPS KAKENHI Grant Numbers 23K03266, and the Quantum and Spacetime Research Institute (QuaSR), Kyushu University.
This work used the computational resources of the Research Center for Computational Science, Okazaki (Project: 25-IMS-C159 and 25-IMS-C068), the MCRP-S at the Center for Computational Sciences, University of Tsukuba, and the Program for Promoting Researches on the Supercomputer Fugaku (“Data-Driven Research Methods Development and Materials Innovation Led by Computational Materials Science,” (JPMXP1020230327) and Project ID: hp250229. 
Norio Yoshida was also supported by JSPS KAKENHI Grant Numbers 24K01434, 24H00282, and 25H00913, and by the Toyota Physical and Chemical Research Institute (Toyota RIKEN) scholar collaboration research Phase-II.

\appendix

\section{Structural Recursion of the Reflected Generator}
\label{app:Un_recursion}
In this section we derive a structural recursion for the reflected generator $U_n=\widetilde{L}_{1,n}=L_{1,n}J_n$. The purpose of this recursion is not to provide the most gate-efficient arithmetic implementation of $L_{1,n}$ itself, but to expose the internal operator structure of the $J_n$-symmetrized reflected-shift family. The optimized compiled realization discussed in Sec.~\ref{sec:complexity} is instead based on the standard binary carry-propagation recursion of the increment operator.

\myheading{Definition of the reflected generator.}
We define the reflected generator $U_n$ as
\begin{equation}
    U_n :=  L_{1,n} J_n,
\end{equation}
where $L_{1,n}$ denotes a cyclic shift by one element in the group.
For $n=1$, the left-regular representation of $\mathbb{Z}/2\mathbb{Z}$ is
\begin{equation}
    L_{1,1} =
    \begin{bmatrix}
        0 & 1 \\
        1 & 0
    \end{bmatrix}
    = X,
\end{equation}
leading to $U_1 = L_{1,1} J_1 = X X = I$.
Based on the actions of $L_{1,n}$ and $J_n$ defined in Eqs.~\eqref{eq:L_action_on_ket} and \eqref{eq:J_action_on_ket}, the action of $U_n$ on a computational basis state $\ket{k}$ is given by
\begin{equation}\label{eq:U_action_on_ket}
    U_n\ket{k}=\ket{2^n - k \pmod{2^n}}
\end{equation}

\myheading{Block structure and projectors.}
\label{sec:recursive_un}
To facilitate the recursive derivation, 
we group the $n$ qubits into $n-1$ ``higher-order'' qubits and one ``least significant'' qubit.  
A computational basis state is expressed as
\begin{equation}
    \ket{x,b}, \quad x = x_{n-1} \cdots x_1 \in \{0,1\}^{n-1},\; b\in\{0,1\},
\end{equation}
where $\ket{x,b} = \ket{x_{n-1} \cdots x_1 b}$ and $b$ denotes the LSB in the binary index.

In this ordered basis, any operator acting on $n$ qubits can be represented as a $2\times 2$ block matrix, where each block acts on the $n-1$ higher-order qubits.
The projectors onto the LSB are
\begin{equation}
    \ket{0}\!\bra{0} = \frac{I+Z}{2}, \qquad
    \ket{1}\!\bra{1} = \frac{I-Z}{2}.
\label{eq:projectors}
\end{equation}
Consequently, a block-diagonal operator $V$ can be represented as
\begin{equation}
   V =  A \otimes \ket{0}\!\bra{0} + B \otimes \ket{1}\!\bra{1}
\end{equation}
where $A,B$ are operators acting on the first $n-1$ qubits.

\begin{lemma}[Recursive form of $U_n$]
\label{lem:Un-recursion}
The generator $U_n$ satisfies the following recursive relation,
\begin{equation}
    U_{n+1}
    =U_{n} \otimes \ket{0}\!\bra{0}
    +J_n \otimes \ket{1}\!\bra{1}
\end{equation}
with the base case $U_1=I$.
Furthermore, $U_n$ can be expressed in the following expanded form:
\begin{equation}
U_n =
I \otimes \ket{0}\!\bra{0}^{\otimes (n-1)}
+
\sum_{i=1}^{n-1}
\left(
J_i \otimes \ket{1}\!\bra{1} \otimes \ket{0}\!\bra{0}^{\otimes (n-i-1)}
\right).
\end{equation}
where $J_i:=X^{\otimes i}$.
\end{lemma}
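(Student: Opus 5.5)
I will verify the recursion on computational basis states, using the closed-form action $U_n\ket{k}=\ket{(2^n-k)\bmod 2^n}$ from Eq.~\eqref{eq:U_action_on_ket}. Since both sides are linear, it suffices to compare them on every $\ket{x,b}$ with $x\in\{0,\dots,2^n-1\}$ and $b\in\{0,1\}$, where $\ket{x,b}$ encodes the integer $k=2x+b$ on $n+1$ qubits.

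\textbf{Even case $b=0$.} Here $k=2x$, and
\[
U_{n+1}\ket{2x}=\ket{(2^{n+1}-2x)\bmod 2^{n+1}}=\ket{2\cdot((2^n-x)\bmod 2^n)}.
\]
This is $\ket{(2^n-x)\bmod 2^n}\otimes\ket{0}=(U_n\ket{x})\otimes\ket{0}$. The right-hand side acts on $\ket{x,0}$ only through the term $U_n\otimes\ket{0}\!\bra{0}$, so the two sides agree.

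\textbf{Odd case $b=1$.} Here $k=2x+1$. Since $k\neq0$, no modular reduction occurs:
\[
2^{n+1}-2x-1=2(2^n-1-x)+1,
\]
which is the basis state $\ket{2^n-1-x}\otimes\ket{1}=(J_n\ket{x})\otimes\ket{1}$ by Eq.~\eqref{eq:J_action_on_ket}. This matches the term $J_n\otimes\ket{1}\!\bra{1}$. The base case $U_1=XX=I$ was already computed above.

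The only delicate point is the wrap-around at $x=0$ in the even case, where $2^{n+1}\bmod 2^{n+1}=0$. I handle it by noting that $(2^{n+1}-2x)\bmod 2^{n+1}=2\,[(2^n-x)\bmod 2^n]$ holds uniformly, including $x=0$.

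\textbf{Expanded form.} I will induct on $n$, unrolling the recursion. For $n=1$ the sum is empty and the formula reads $U_1=I$. For the step, substitute the expanded $U_n$ into $U_n\otimes\ket{0}\!\bra{0}$. Each term acquires one more trailing $\ket{0}\!\bra{0}$ factor, so:
\begin{itemize}
\item the leading term becomes $I\otimes\ket{0}\!\bra{0}^{\otimes n}$;
\item the $i$-th summand becomes $J_i\otimes\ket{1}\!\bra{1}\otimes\ket{0}\!\bra{0}^{\otimes(n-i)}$, for $i=1,\dots,n-1$.
\end{itemize}
The added term $J_n\otimes\ket{1}\!\bra{1}$ is exactly the missing summand $i=n$, whose trailing zero-projector power is empty.

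For this to work, the identity in the leading term must be read as $I$ on the top qubit, or equivalently as $J_0=I$ on zero qubits with $\ket{1}\!\bra{1}$ absorbed. The cleanest way to make this consistent is to interpret the leading term as acting with $I$ on the single most significant qubit. Then the $n=1$ and inductive cases align with the stated formula, whose tensor factors total $n$ qubits. I will check the qubit count in each term explicitly.
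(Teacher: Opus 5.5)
Your proof is correct and is essentially the paper's argument for the recursion. Both check the identity on basis states $\ket{x,b}$, split by the LSB. You apply the closed form $U_{n+1}\ket{k}=\ket{(2^{n+1}-k)\bmod 2^{n+1}}$ directly, whereas the paper applies $J_{n+1}$ and then $L_{1,n+1}$. You also handle the $x=0$ wrap-around more carefully than the paper, which writes $\ket{2(2^n-k)}$ before reducing.

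Your unrolling induction for the expanded form supplies a step the paper states but never proves. The reading you settle on is the right one: the leading $I$ is $U_1$ acting on the most significant qubit. The aside calling this ``equivalently'' $J_0$ with $\ket{1}\!\bra{1}$ absorbed is not literally equivalent, since $\ket{1}\!\bra{1}\otimes\ket{0}\!\bra{0}^{\otimes(n-1)}$ misses the $\ket{0}\!\bra{0}^{\otimes n}$ component, so it should simply be dropped.
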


\begin{proof}
We verify the recursive relation by considering the base case and then deriving the block structure of $U_{n+1}$ for the general case.

{\bf Base case: ($n=1$)}: 

\noindent From the definition $U_n = L_{1,n}J_n$, we have $U_1 = L_{1,1}J_1 = X X = I$.
Applying the recursive formula \eqref{eq:Un-recursion} for $n=1$ yields:\begin{equation}U_2 = U_1 \otimes \ket{0}\!\bra{0} + J_1 \otimes \ket{1}\!\bra{1} = I \otimes \ket{0}\!\bra{0} + X \otimes \ket{1}\!\bra{1}.
\end{equation}
Direct calculation of $U_2 = L_{1,2}J_2$ confirms that its action on the basis states matches this block-diagonal form (e.g., $U_2\ket{00} = \ket{00}$ and $U_2\ket{01} = \ket{11}$).

{\bf General case: ($n\ge1$)}: 

\noindent We derive the recursive block form of $U_{n+1}=L_{1,n+1}J_{n+1}$ by evaluating its action on the ordered basis $\ket{k,b}$, where $k \in \{0,\cdots,2^n-1\}$ and $b\in\{0,1\}$.
From the definition of the operator, $J_{n+1}\ket{k, b} = \ket{2^n-1-k, 1-b}$.

\begin{equation}
\begin{split}
     U_{n+1}\ket{k,b}
 &=L_{1,{n+1}}~J_{n+1}\ket{k,b}\\
 &=L_{1,{n+1}}~\left( X^{\otimes n}\ket{k}\otimes X\ket{b}\right) \\
 &=L_{1,{n+1}} \left(\ket{2^{n} -1-k} \ket{b\oplus1} \right)
\end{split}
\end{equation}

\paragraph{Case 1: $b=0$}
\begin{equation}
\begin{split}
U_{n+1}\ket{k, 0} &= L_{1,n+1} (\ket{2^n-1-k} \otimes \ket{1}) \\
&= L_{1,n+1} \ket{2(2^n-1-k) + 1} \\
&= \ket{2(2^n-1-k) + 2 \pmod{2^{n+1}}} \\
&= \ket{2(2^n-k)} \\
&= \ket{2^n-k \pmod{2^n}} \otimes \ket{0}\\
&=U_{n}\ket{k}\otimes\ket{0}
\end{split}
\end{equation}
where the last equality is based on Eq~\eqref{eq:U_action_on_ket}.

\paragraph{Case 2: $b=1$}
\begin{equation}
\begin{split}
U_{n+1}\ket{k, 1} 
&= L_{1,n+1} (\ket{2^n-1-k} \otimes \ket{0}) \\
&= L_{1,n+1} \ket{2(2^n-1-k)} \\
&= \ket{2(2^n-1-k) + 1} \\
&= \ket{2^n-1-k} \otimes \ket{1}\\
&=J_{n}\ket{k}\otimes\ket{1}    
\end{split}
\end{equation}

Combining these cases, the relation $U_{n+1} = U_n \otimes \ket{0}\!\bra{0} + J_n \otimes \ket{1}\!\bra{1}$ holds for all $n \ge 1$.
\end{proof}

\begin{remark}[From structural recursion to carry recursion]
The recursion of $U_n$ should be interpreted as a structural recursion for the reflected generator $\widetilde{L}_{1,n}=L_{1,n}J_n=U_n$, not as the preferred arithmetic implementation of $L_{1,n}$ itself. The usual binary carry rule is recovered only after composing back with the reversal layer through $L_{1,n+1}=U_{n+1}J_{n+1}$. Under little-endian ordering, this converts the block recursion
\begin{equation}
U_{n+1}=U_n\otimes\ket{0}\!\bra{0}+J_n\otimes\ket{1}\!\bra{1}
\end{equation}
into the standard increment logic for $L_{1,n+1}$: an original input LSB of $0$ produces no carry to higher bits, whereas an original input LSB of $1$ triggers recursive carry propagation. Thus the structural recursion of $U_n$ is the operator-level source, while the compiled increment recursion described in Sec.~\ref{sec:complexity} is its arithmetic carry-propagation form after multiplication by $J_{n+1}$.
\end{remark}

Figure~\ref{fig:qc_recursive} illustrates the direct structural-recursive realization associated with the recursion of the reflected generator $U_n$.
\begin{figure}[htbp]
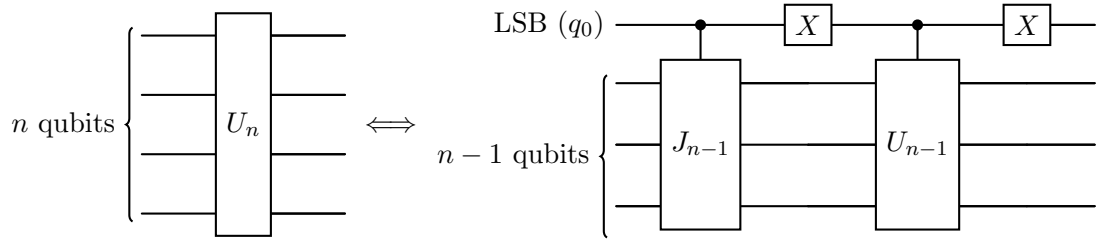

\centering
\resizebox{0.9\columnwidth}{!}{%
\begin{tabular}{ccc}
\input{figs/qc_Un}&
$\Longleftrightarrow$ &
\input{figs/qc_recursive}\\
\end{tabular}
}
\caption{Recursive construction of the reflected generator $U_n$. The operation on the upper register is conditioned on the LSB ($q_0$): if $q_0=\ket{0}$, $U_{n-1}$ is applied; if $q_0=\ket{1}$, $J_{n-1}$ is applied.}
\label{fig:qc_recursive}
\end{figure}

\section{Pauli Support of the Reflected-Shift Family}
\label{app:pauli_support}
The structural recursion of $U_n$ also induces a corresponding recursion for the full reflected-shift family $\widetilde{L}_{i,n}$, which in turn sharply constrains its Pauli support.
\begin{proposition}[Pauli support of the reflected-shift family]
\label{prop:pauli_support_reflected}
For each $n\ge 1$, every reflected shift $\widetilde{L}_{i,n}=L_{i,n}J_n$ lies in
\begin{equation}
\mathcal S_n:=\mathrm{span}\!\left(\{I,X\}\otimes\{I,X,Z\}^{\otimes(n-1)}\right).
\end{equation}
In particular, no Pauli-$Y$ term appears in the Pauli decomposition of any reflected shift. Hence the family uses at most
\begin{equation}
|\mathcal S_n| = 2\cdot 3^{n-1}=\frac{2}{3} \cdot 3^n
\end{equation}
distinct Pauli strings.
\end{proposition}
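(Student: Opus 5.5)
We argue by induction on $n$. Throughout we use the tensor ordering of Appendix~\ref{app:Un_recursion}: the last tensor factor is the LSB and the first is the most significant qubit. In $\mathcal S_n$, the factor restricted to $\{I,X\}$ is therefore the most significant one. Rather than multiplying out $(U_nJ_n)^iJ_n$, which would require products of spans that are not closed under multiplication (e.g.\ $XZ=-\mathrm{i}Y$), we use the explicit basis action of a reflected shift. Combining Eqs.~\eqref{eq:L_action_on_ket} and \eqref{eq:J_action_on_ket},
\begin{equation}
\widetilde{L}_{i,n}\ket{k}=L_{i,n}\ket{2^n-1-k}=\ket{(j-k)\bmod 2^n},\qquad j:=i-1 .
\end{equation}
Define $R^{(n)}_j\ket{k}:=\ket{(j-k)\bmod 2^n}$. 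Then the family $\{\widetilde{L}_{i,n}\}_i$ is exactly $\{R^{(n)}_j\}_{j\in\mathbb Z/2^n}$, and it suffices to show $R^{(n)}_j\in\mathcal S_n$ for all $j$.

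\emph{Base case.} For $n=1$ we have $R^{(1)}_0=X$ (it sends $0\mapsto 0$, $1\mapsto 1$ only after the sign flip mod $2$; directly, $R^{(1)}_0\ket{k}=\ket{-k\bmod 2}=\ket{k}$, so $R^{(1)}_0=I$) and $R^{(1)}_1\ket{k}=\ket{1-k}$, so $R^{(1)}_1=X$. Equivalently, this matches $\widetilde{L}_{0,1}=J_1=X$ and $\widetilde{L}_{1,1}=U_1=I$. Hence both lie in $\mathrm{span}\{I,X\}=\mathcal S_1$.

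\emph{Inductive step.} Write $k=2k'+b$ and $j=2j'+c$ with $b,c\in\{0,1\}$, mirroring the block decomposition of Appendix~\ref{app:Un_recursion}. Binary subtraction gives:
\begin{itemize}
\item the LSB of $j-k$ is $c\oplus b$;
\item the upper part is $j'-k'-\beta \pmod{2^{n-1}}$, where the borrow is $\beta=1$ exactly when $(c,b)=(0,1)$.
\end{itemize}
The two cases for $c$ are then:
\begin{itemize}
\item If $c=1$, no borrow ever occurs, and the LSB flips. Hence
\begin{equation}
R^{(n)}_j=R^{(n-1)}_{j'}\otimes X .
\end{equation}
\item If $c=0$, the LSB is preserved and the borrow depends on $b$. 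Hence
\begin{equation}
R^{(n)}_j=R^{(n-1)}_{j'}\otimes\ket{0}\!\bra{0}+R^{(n-1)}_{j'-1}\otimes\ket{1}\!\bra{1}.
\end{equation}
\end{itemize}
By Eq.~\eqref{eq:projectors}, $\ket{0}\!\bra{0}=\tfrac{1}{2}(I+Z)$ and $\ket{1}\!\bra{1}=\tfrac{1}{2}(I-Z)$, so in both cases $R^{(n)}_j$ lies in $\mathrm{span}\big(\{R^{(n-1)}_m\}_m\big)\otimes\mathrm{span}\{I,X,Z\}$. This recursion generalizes Lemma~\ref{lem:Un-recursion}, which is the special case $j=0$, i.e.\ $U_n=R^{(n)}_0$. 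By the induction hypothesis, each $R^{(n-1)}_m$ lies in $\mathcal S_{n-1}$. Therefore $R^{(n)}_j\in\mathcal S_{n-1}\otimes\mathrm{span}\{I,X,Z\}=\mathcal S_n$. Since the new factor never contains $Y$, no Pauli-$Y$ appears, and the first (most significant) factor remains in $\{I,X\}$.

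\emph{Counting.} The spanning set $\{I,X\}\otimes\{I,X,Z\}^{\otimes(n-1)}$ consists of $2\cdot 3^{n-1}$ Pauli strings. These are linearly independent, since distinct Pauli strings are orthogonal under the Hilbert--Schmidt inner product. Thus any Pauli decomposition of any $\widetilde{L}_{i,n}$ uses at most $2\cdot 3^{n-1}=\tfrac23\cdot 3^n$ distinct strings.

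The main obstacle is the borrow bookkeeping in the $c=0$ case. One must verify modulo $2^{n-1}$ that $j'-1$ wraps around correctly when $j'=0$, and that the upper map is again a reflected shift $R^{(n-1)}_{m}$ rather than some other permutation. This is where the induction could break, and it is the step I would check most carefully on small $n$ (e.g.\ $n=2$ against $U_2$ from Lemma~\ref{lem:Un-recursion}).
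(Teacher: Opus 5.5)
Your proof is correct and follows essentially the paper's route. It uses the same induction peeling off the LSB, and after reindexing $j=i-1$ your two branches coincide exactly with the paper's formulas $\widetilde{L}_{2r,n+1}=\widetilde{L}_{r,n}\otimes X$ and $\widetilde{L}_{2r+1,n+1}=\widetilde{L}_{r+1,n}\otimes\ket{0}\!\bra{0}+\widetilde{L}_{r,n}\otimes\ket{1}\!\bra{1}$; the only difference is that you derive them by direct binary subtraction on $\ket{k}\mapsto\ket{(j-k)\bmod 2^n}$ instead of from $L_{2r,n+1}=L_{r,n}\otimes I$ and the $U_{n+1}$ recursion. The wrap-around you worry about is automatic because $R^{(n-1)}_m$ is indexed by $m\in\mathbb Z/2^{n-1}$, but you should clean up the base-case parenthetical, which first asserts $R^{(1)}_0=X$ before correctly concluding $R^{(1)}_0=I$.
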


\begin{proof}
We argue by induction on $n$. For $n=1$, one has
\begin{equation}
\widetilde{L}_{0,1}=X,\qquad \widetilde{L}_{1,1}=I,
\end{equation}
so the claim holds with $\mathcal S_1=\mathrm{span}\{I,X\}$. For the inductive step, assume that for some fixed $n\ge 1$, every reflected shift $\widetilde{L}_{i,n}$ with $i\in\{0,\dots,2^n-1\}$ lies in 
\begin{equation}
\mathcal S_n=
\mathrm{span}\!\left(\{I,X\}\otimes\{I,X,Z\}^{\otimes n-1}\right).
\end{equation}
Now consider $n+1$ qubits and write the index as either $2r$ or $2r+1$, with $r\in\{0,\dots,2^n-1\}$. We use the main text definition $\widetilde{L}_{i,n}=L_{i,n}J_n$, the recursion of $U_n$ from Appendix~\ref{sec:recursive_un}, and the shift decomposition from Eq.~\eqref{eq:pow2_shift}. In particular, on $(n+1)$ little-endian qubits, an even shift leaves the LSB unchanged and acts as a shift by $r$ on the upper $n$ qubits, so
\begin{equation}
L_{2r,n+1}=L_{r,n}\otimes I.
\end{equation}
Hence
\begin{equation}
\begin{split}
\widetilde{L}_{2r,n+1}
&=L_{2r,n+1}J_{n+1} \\
&=(L_{r,n}\otimes I)(J_n\otimes X) \\
&=(L_{r,n}J_n)\otimes X \\
&=\widetilde{L}_{r,n}\otimes X.
\end{split}
\end{equation}

For the odd branch, using $L_{2r+1,n+1}=L_{2r,n+1}L_{1,n+1}$ together with $L_{1,n+1}J_{n+1}=U_{n+1}$, we obtain
\begin{equation}
\begin{split}
\widetilde{L}_{2r+1,n+1}
&=L_{2r+1,n+1}J_{n+1} \\
&=L_{2r,n+1}U_{n+1} \\
&=(L_{r,n}\otimes I)\left(U_n\otimes\ket{0}\!\bra{0}+J_n\otimes\ket{1}\!\bra{1}\right) \\
&=L_{r,n}U_n\otimes\ket{0}\!\bra{0}+L_{r,n}J_n\otimes\ket{1}\!\bra{1} \\
&=\widetilde{L}_{r+1,n}\otimes\ket{0}\!\bra{0}
+\widetilde{L}_{r,n}\otimes\ket{1}\!\bra{1},
\end{split}
\end{equation}
where the index $r+1$ is taken modulo $2^n$, so that $r=2^n-1$ gives $r+1\equiv 0 \pmod{2^n}$.

By the induction hypothesis, both operators appearing on the right-hand side, namely $\widetilde{L}_{r,n}$ and $\widetilde{L}_{r+1,n}$, lie in $\mathcal S_n$. From Eq.~\eqref{eq:projectors}
it follows that both even and odd branches lie in
\begin{equation}
\mathcal S_n\otimes\mathrm{span}\{I,X,Z\}
=
\mathrm{span}\!\left(\{I,X\}\otimes\{I,X,Z\}^{\otimes n}\right)
=
\mathcal S_{n+1}.
\end{equation}
Thus every reflected shift $\widetilde{L}_{i,n+1}$ lies in $\mathcal S_{n+1}$, and no Pauli-$Y$ term can occur.

Finally, the number of basis strings in $\mathcal S_n$ is
\begin{equation}
|\mathcal S_n| = 2\cdot 3^{n-1},
\end{equation}
since the leftmost tensor factor contributes two possibilities $\{I,X\}$ and each remaining tensor factor contributes three possibilities $\{I,X,Z\}$.
\end{proof}

Because $H_n(\bm b)=\sum_{i=0}^{2^n-1} b_i\,\widetilde{L}_{i,n}$, the convolution operator itself lies in the same Pauli-support subspace $\mathcal S_n$. Consequently, its Pauli decomposition contains no $Y$ terms and uses at most $2\cdot 3^{n-1}$ strings, with generic coefficients $\bm b$ typically populating this support up to accidental cancellations.

\section{Bitwise Compilation of the Reflected-Shift SELECT Block}
\label{app:bitwise_compilation}
This appendix records the supplementary optimized bitwise recursive compilation underlying the compiled macro block estimates used in Sec.~\ref{sec:complexity}, proves its exact equivalence to the recursive construction used in the main text, and records the two explicit recursive realizations of the block $W_{A_m}[L_{2^m,n}]$ used in Algorithm~\ref{alg:quantum_conv}.

\myheading{Conventions.}
\label{app:compiled_conventions}
We keep the notation of the main text and only introduce the compiled suffix-increment block used in Sec.~\ref{sec:complexity}.
Let $N=2^n$. We use little-endian integer encoding on the ancilla register $A=(A_0, \cdots, A_{n-1})$ and the data register
$D=(D_0,\dots,D_{n-1})$:
\begin{equation}
x=\sum_{j=0}^{n-1} x_j 2^j,\qquad x_j\in\{0,1\}.
\end{equation}
Here $D_0$ is the LSB.
As in the main text,
\begin{equation}
L_{r,n}\ket{x}=\ket{x+r \pmod{2^n}},\qquad
J_n=X^{\otimes n}.
\end{equation}
We also use the main text definition of $\mathrm{SELECT}_{\widetilde{L}}$ in Eq.~\eqref{eq:select_tildeL_coherent}.
To state the compiled network unambiguously, we use the multi-control extension of Eq.~\eqref{eq:controlled_W_def} with $r$ control qubits $q_1, \cdots q_r$:
\begin{equation}
W_{q_1,\dots,q_r}[O]
:=
\left(I-\Pi_{q_1,\dots,q_r}\right)\otimes I
+
\Pi_{q_1,\dots,q_r}\otimes O,\qquad
\Pi_{q_1,\dots,q_r}:=\bigotimes_{\ell=1}^{r}\ket{1}\!\bra{1}_{q_\ell},
\label{eq:multi_control_W}
\end{equation}
Our goal is to explicitly construct a gate-level circuit that increments the data subregister from bit $m$ to $n-1$.
As in the compiled macro block discussion of Sec.~\ref{sec:complexity}, this is realized by a standard carry-propagation cascade.
Accordingly, for each $m\in\{0,\dots,n-1\}$, we define this compiled incrementer block as
\begin{equation}
\mathrm{INC}_{[m:n-1]}^{\mathrm{(cmp)}}
:=X_{D_m}\left(\prod_{j=m+1}^{n-1}W_{D_m,\dots,D_{j-1}}[X_{D_j}]\right),
\label{eq:inc_compiled_def}
\end{equation}
where, under the standard right-to-left action on kets, the gates are applied in descending target order
$j=n-1,n-2,\dots,m+1$, followed by $X_{D_m}$.
Its $A_m$-controlled form is (see Fig.~\ref{fig:qc_bitwise_compiled})
\begin{equation}
W_{A_m}\!\left[\mathrm{INC}_{[m:n-1]}^{\mathrm{(cmp)}}\right]
=
W_{A_m}[X_{D_m}]
\left(\prod_{j=m+1}^{n-1}W_{A_m,D_m,\dots,D_{j-1}}[X_{D_j}]\right).
\label{eq:inc_compiled_controlled}
\end{equation}

\myheading{Explicit recursion of the compiled suffix increment.}
\label{app:compiled_suffix_recursion}
The compiled suffix-increment network also admits a direct recursive description.
For $m=n-1$ we have the base case
\begin{equation}
\mathrm{INC}_{[n-1:n-1]}^{\mathrm{(cmp)}} = X_{D_{n-1}}.
\end{equation}
For $m=0,\dots,n-2$, the compiled suffix increment satisfies
\begin{equation}
\mathrm{INC}_{[m:n-1]}^{\mathrm{(cmp)}}
=
X_{D_m}\, W_{D_m}\!\left[\mathrm{INC}_{[m+1:n-1]}^{\mathrm{(cmp)}}\right].
\label{eq:inc_compiled_recursion}
\end{equation}
Equivalently, its $A_m$-controlled form obeys
\begin{equation}
W_{A_m}\!\left[\mathrm{INC}_{[m:n-1]}^{\mathrm{(cmp)}}\right]
=
W_{A_m}[X_{D_m}]\,
W_{A_m,D_m}\!\left[\mathrm{INC}_{[m+1:n-1]}^{\mathrm{(cmp)}}\right],
\label{eq:inc_compiled_controlled_recursion}
\end{equation}
with base case
\begin{equation}
W_{A_m}\!\left[\mathrm{INC}_{[n-1:n-1]}^{\mathrm{(cmp)}}\right]
=
W_{A_m}[X_{D_{n-1}}].
\end{equation}
Expanding Eq.~\eqref{eq:inc_compiled_controlled_recursion} recursively reproduces exactly the multi-controlled-$X$ chain in Eq.~\eqref{eq:inc_compiled_controlled}. Thus the compiled bitwise realization can itself be generated recursively, but in a gate-optimized carry-propagation form.

\begin{figure}[t]
\centering
\input{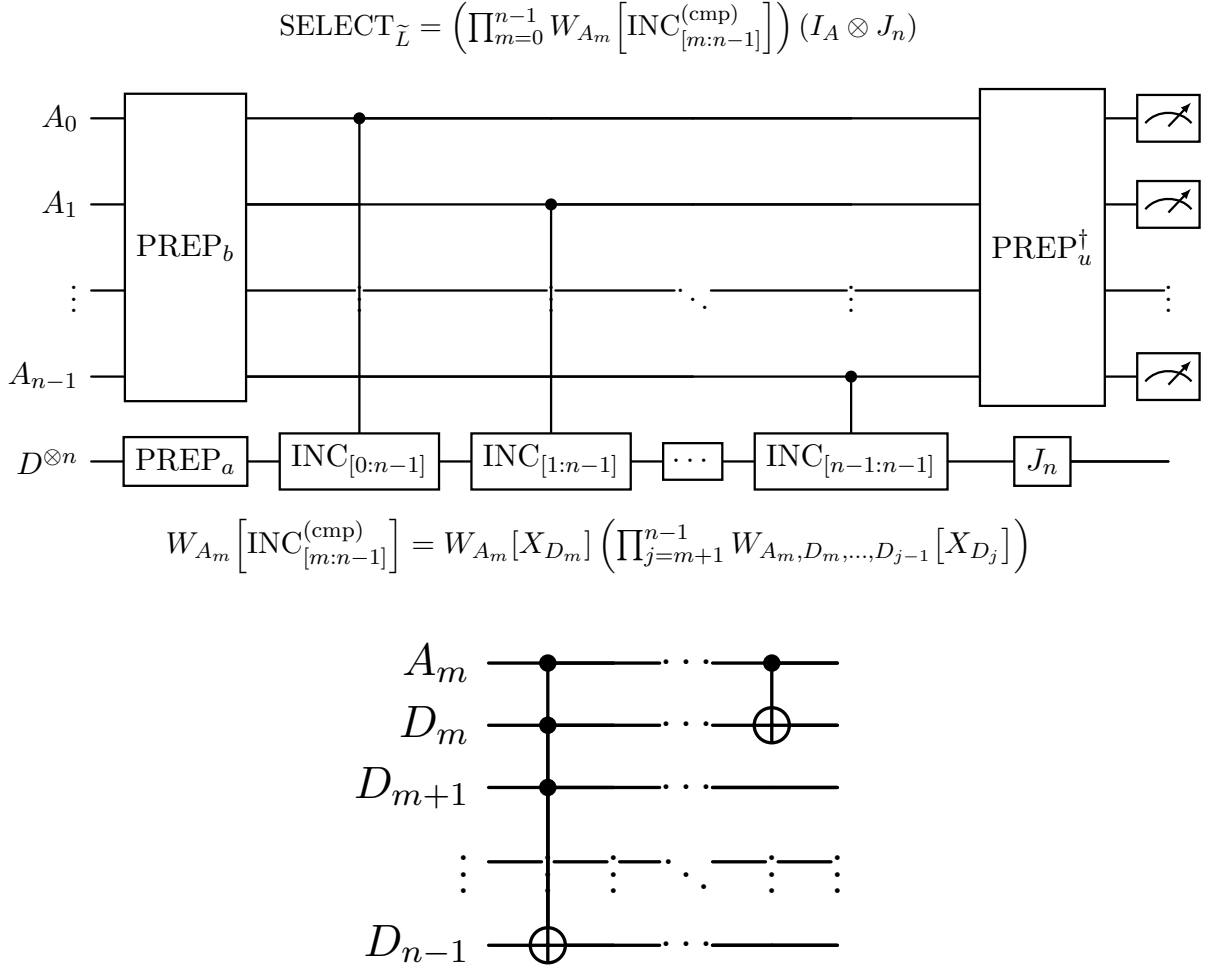}
\caption{Compiled $n$-qubit bitwise realization with explicit internal structure. Top panel: $\mathrm{SELECT}_{\widetilde{L}}=\left(\prod_{m=0}^{n-1}W_{A_m}\!\left[\mathrm{INC}_{[m:n-1]}^{\mathrm{(cmp)}}\right]\right)(I_A\otimes J_n)$ inside the asymmetric-LCU pipeline. Bottom panel: decomposition of one block $W_{A_m}\!\left[\mathrm{INC}_{[m:n-1]}^{\mathrm{(cmp)}}\right]$ into a chain of multi-controlled $X$ operations $W_{A_m,D_m,\ldots,D_{j-1}}[X_{D_j}]$ ending with $W_{A_m}[X_{D_m}]$.}
\label{fig:qc_bitwise_compiled}
\end{figure}

\begin{lemma}[Compiled bitwise controlled increment]
\label{lem:compiled_inc}
Fix $k\ge1$, and let
\begin{equation}
x=\sum_{j=0}^{k-1} x_j 2^j,\qquad x_j\in\{0,1\},
\end{equation}
so that $\ket{x}=\ket{x_{k-1}\cdots x_0}$ with $x_0$ the LSB, consistent with the conventions introduced in Appendix~\ref{app:compiled_conventions}.
Let $c\in\{0,1\}$.
Consider the $k$-bit version of Eq.~\eqref{eq:inc_compiled_controlled}: for each $j=1,\dots,k-1$, apply the
multi-controlled $X$ gate with controls $c,x_0,\dots,x_{j-1}$ and target bit $j$.
Under the right-to-left convention these gates are applied in descending order $j=k-1,k-2,\dots,1$,
followed by the final controlled $X$ on the LSB.
Then
\begin{equation}
\ket{c}\ket{x}\longmapsto \ket{c}\ket{x+c \pmod{2^k}}.
\end{equation}
Equivalently, the output bits satisfy
\begin{equation}
y_0=x_0\oplus c,\qquad
y_j=x_j\oplus\Bigl(c\land \bigwedge_{\ell=0}^{j-1}x_\ell\Bigr)\quad (j=1,\dots,k-1).
\end{equation}
Hence the circuit is exactly $W_c[L_{1,k}]$.
\end{lemma}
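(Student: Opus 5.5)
We prove the lemma by tracking the computational-basis bits through the gate sequence, in the stated order. The case $c=0$ is immediate: every gate in the network carries $c$ as a control, so the circuit acts as the identity, and $x+0=x$. We may therefore assume $c=1$. The control $c$ then drops out, and the network reduces to the uncontrolled cascade of Eq.~\eqref{eq:inc_compiled_def} on $k$ bits. The task becomes showing that this cascade realizes $L_{1,k}$, i.e., the bit formulas $y_0=x_0\oplus1$ and $y_j=x_j\oplus\bigwedge_{\ell<j}x_\ell$.

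\textbf{Key step: controls are read before they are modified.} The gates are applied with targets in descending order $j=k-1,\dots,1$, followed by $X$ on bit $0$. The gate targeting bit $j$ is controlled only by bits $0,\dots,j-1$. Every gate applied earlier targets a bit with index greater than $j$, so none of them has touched bits $0,\dots,j-1$. Hence, when the gate on bit $j$ fires, its control bits still hold the original values $x_0,\dots,x_{j-1}$. Bit $j$ itself is targeted exactly once. It follows that $y_j=x_j\oplus(c\wedge\bigwedge_{\ell<j}x_\ell)$, and the final gate gives $y_0=x_0\oplus c$. This establishes the stated bit formulas. We expect this ordering argument to be the main point requiring care, since ascending order would corrupt the controls.

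\textbf{Identifying the result as binary increment.} It remains to show that $y$ is the binary representation of $x+1 \bmod 2^k$. Let $t$ be the number of trailing ones of $x$: $x_0=\dots=x_{t-1}=1$, and either $x_t=0$ or $t=k$.

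\emph{Case $t<k$.} The prefix-AND $\bigwedge_{\ell<j}x_\ell$ equals $1$ exactly for $j\le t$. Bits $0,\dots,t$ flip: $0,\dots,t-1$ go from $1$ to $0$ and bit $t$ goes from $0$ to $1$. Higher bits are unchanged. This is precisely the carry rule: $x+1=x-(2^t-1)+2^t$.

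\emph{Case $t=k$.} All bits flip to $0$, which gives $2^k-1+1\equiv0 \pmod{2^k}$.

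Hence $\ket{1}\ket{x}\mapsto\ket{1}\ket{x+1 \bmod 2^k}=\ket{1}L_{1,k}\ket{x}$. Combining this with the $c=0$ case, the circuit agrees with $W_c[L_{1,k}]$ on the full computational basis. By linearity the two operators are equal, as in the proof of Lemma~\ref{lem:adder_independence}. As an alternative check, induction on $k$ using the recursion of Eq.~\eqref{eq:inc_compiled_recursion} gives the same result: if $x_0=0$, flip it; otherwise set bit $0$ to $0$ and increment the upper $k-1$ bits.
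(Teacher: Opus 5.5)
Your proposal is correct and follows essentially the same route as the paper. You dispose of $c=0$ first, then use the descending target order to show that each gate's controls still hold the original input bits, which yields the stated bit formulas. Your trailing-ones case analysis just spells out the identification with the binary carry rule $x\mapsto x+1 \pmod{2^k}$, which the paper asserts in a single line.
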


\begin{proof}
If $c=0$, no controlled operation is applied, and the register remains unchanged.
Now assume $c=1$.
Because the targets are visited in descending order $j=k-1,\dots,1$, when the gate targeting bit $j$ is applied,
all lower bits $x_0,\dots,x_{j-1}$ are still their original input values.
Therefore that gate flips bit $j$ if and only if all lower bits are $1$.
After the final flip of the LSB, the output bits are
\begin{equation}
y_0=x_0\oplus 1,\qquad
y_j=x_j\oplus\Bigl(\bigwedge_{\ell=0}^{j-1}x_\ell\Bigr)\quad (j=1,\dots,k-1).
\end{equation}
These are exactly the usual binary carry rules for $x\mapsto x+1 \pmod{2^k}$:
bit $j$ flips precisely when a carry reaches it, and that happens exactly when all less-significant bits are $1$.
Combining the cases $c=0$ and $c=1$ gives
\begin{equation}
y_0=x_0\oplus c,\qquad
y_j=x_j\oplus\Bigl(c\land \bigwedge_{\ell=0}^{j-1}x_\ell\Bigr)\quad (j=1,\dots,k-1),
\end{equation}
hence $\ket{c}\ket{x}\mapsto \ket{c}\ket{x+c \pmod{2^k}}$.
\end{proof}

\begin{lemma}[Suffix increment realizes the block shift $L_{2^m,n}$]
\label{lem:suffix_shift}
Let $m\in\{0,\dots,n-1\}$ and $k=n-m$.
Applying $L_{1,k}$ to suffix qubits $(D_m,\dots,D_{n-1})$ is
\begin{equation}
L_{1,n-m}\otimes I^{\otimes m}=L_{2^m,n}.
\label{eq:suffix_weight}
\end{equation}
\end{lemma}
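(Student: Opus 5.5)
The plan is to verify the identity on the computational basis, as was done for Lemma~\ref{lem:adder_independence}. Both sides are permutation unitaries, so it suffices to show they agree on every basis state $\ket{x}$ of the $n$-qubit data register. Write $x$ in little-endian form and split it into a prefix and a suffix:
\begin{equation}
x = p + 2^m s,\qquad p=\sum_{j=0}^{m-1}x_j2^j\in\{0,\dots,2^m-1\},\qquad s=\sum_{j=0}^{k-1}x_{m+j}2^j\in\{0,\dots,2^k-1\}.
\end{equation}
Here $p$ is stored on $(D_0,\dots,D_{m-1})$ and $s$ on the suffix qubits $(D_m,\dots,D_{n-1})$. In the tensor-ordering convention of the main text, where the basis label is written $\ket{x_{n-1}\cdots x_0}$, the suffix factor stands on the left, so $\ket{x}=\ket{s}\otimes\ket{p}$. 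This is exactly the ordering that appears in $L_{1,n-m}\otimes I^{\otimes m}$.

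Next I compute both sides on $\ket{s}\otimes\ket{p}$. The left-hand side gives $\ket{s+1 \bmod 2^k}\otimes\ket{p}$, which corresponds to the integer $p+2^m\big((s+1)\bmod 2^k\big)$. The right-hand side gives $L_{2^m,n}\ket{x}=\ket{x+2^m \bmod 2^n}$. Since $x+2^m=p+2^m(s+1)$ and $2^m\cdot 2^k=2^n$, reducing modulo $2^n$ leaves $p$ untouched, because $0\le p<2^m$. It reduces the factor $s+1$ modulo $2^k$, because $2^m(s+1)\bmod 2^n = 2^m\big((s+1)\bmod 2^k\big)$. The two integers therefore coincide. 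Equivalently, in carry language, adding $2^m$ never disturbs bits below $m$ and starts a carry at bit $m$. The carry out of bit $n-1$ is discarded, which is the cyclic wraparound of $L_{1,k}$.

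The only delicate point is bookkeeping rather than mathematics. I need to make sure the tensor-product ordering $L_{1,n-m}\otimes I^{\otimes m}$ really places the identity on the $m$ least-significant qubits under the paper's little-endian convention, and that the wraparound cases match. The wraparound cases are $s=2^k-1$ on the left and $x+2^m\ge 2^n$ on the right. The wraparound is handled by the modular identity above. For $m=0$ the statement is trivial, since $L_{1,n}=L_{1,n}$.

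As a corollary, I will combine this with Lemma~\ref{lem:compiled_inc} applied with $c=A_m$ and $k=n-m$. This shows that $W_{A_m}[\mathrm{INC}^{(\mathrm{cmp})}_{[m:n-1]}]=W_{A_m}[L_{2^m,n}]$, matching Eq.~\eqref{eq:pow2_shift}.
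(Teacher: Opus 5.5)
Your proposal is correct and follows essentially the same route as the paper: both split $x=x_{\mathrm{low}}+2^m x_{\mathrm{high}}$ and observe that incrementing the high part modulo $2^{n-m}$ is the same as adding $2^m$ modulo $2^n$. Your version is more explicit than the paper's on two points it leaves implicit: the little-endian tensor ordering, and why the wraparound identity $2^m(s+1)\bmod 2^n = 2^m\big((s+1)\bmod 2^k\big)$ makes the two sides agree.
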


\begin{proof}
Write
\begin{equation}
x=x_{\mathrm{low}}+2^m x_{\mathrm{high}},
\quad
0\le x_{\mathrm{low}}<2^m,\quad
0\le x_{\mathrm{high}}<2^{n-m}.
\end{equation}
Suffix increment sends $x_{\mathrm{high}}\mapsto x_{\mathrm{high}}+1 \pmod{2^{n-m}}$.
Hence
\begin{equation}
x' = x_{\mathrm{low}}+2^m(x_{\mathrm{high}}+1)
\equiv x+2^m \pmod{2^n},
\end{equation}
which is exactly $L_{2^m,n}$.
\end{proof}

\begin{proposition}[Exact equivalence to the compiled realization of $\mathrm{SELECT}_{\widetilde{L}}$]
\label{prop:exact_equiv}
The compiled bitwise circuit
\begin{equation}
\left(\prod_{m=0}^{n-1}W_{A_m}\!\left[\mathrm{INC}_{[m:n-1]}^{\mathrm{(cmp)}}\right]\right)(I_A\otimes J_n)
=
\mathrm{SELECT}_{\widetilde{L}}.
\end{equation}
\end{proposition}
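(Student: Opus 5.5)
The plan is to reduce the claimed identity to a factor-by-factor comparison with the main-text definition of $\mathrm{SELECT}_{\widetilde{L}}$ in Eq.~\eqref{eq:select_tildeL_coherent}. There, $\mathrm{SELECT}_{\widetilde{L}}$ is the product $\prod_{m=0}^{n-1}W_{A_m}[L_{2^m,n}]$ followed on the right by $(I_A\otimes J_n)$. The compiled circuit has exactly the same outer shape: the same $J_n$ layer on the right and the same ordering of the factors in $m$. Hence it suffices to prove, for each fixed $m\in\{0,\dots,n-1\}$, the operator identity
\begin{equation}
W_{A_m}\!\left[\mathrm{INC}_{[m:n-1]}^{\mathrm{(cmp)}}\right]=W_{A_m}[L_{2^m,n}].
\end{equation}
Equality then follows by multiplying the $n$ identical factors in the same order and appending $(I_A\otimes J_n)$.

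For the single-block identity I would combine Lemma~\ref{lem:compiled_inc} and Lemma~\ref{lem:suffix_shift}. Set $k=n-m$ and relabel the suffix qubits $(D_m,\dots,D_{n-1})$ as a $k$-bit little-endian register whose LSB is $D_m$. Under this relabeling, Eq.~\eqref{eq:inc_compiled_controlled} is precisely the $k$-bit circuit of Lemma~\ref{lem:compiled_inc} with control $c=A_m$: the multi-controlled $X$ gates have controls $A_m,D_m,\dots,D_{j-1}$ and target $D_j$, they are applied in descending order, and the chain ends with $W_{A_m}[X_{D_m}]$. That lemma therefore gives $W_{A_m}[L_{1,k}]$ acting on the suffix, tensored with the identity on $D_0,\dots,D_{m-1}$. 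Lemma~\ref{lem:suffix_shift} then identifies $L_{1,n-m}\otimes I^{\otimes m}$ with $L_{2^m,n}$. Since the control projector commutes with taking the tensor product with the identity on the untouched qubits, the controlled versions agree as well. If one prefers a more self-contained argument, the same equality can be checked directly on computational basis states $\ket{a}_A\ket{x}_D$: both sides act as the identity when $a_m=0$ and as $x\mapsto x+2^m \pmod{2^n}$ when $a_m=1$. Agreement on a basis implies equality of the unitaries.

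As an optional consistency check, I would note that the ordered product over $m$ acting on $\ket{a}\ket{x}$ adds $\sum_m a_m2^m=a$. By Eq.~\eqref{eq:binary_shift_decomp}, this recovers $\mathrm{SELECT}_L$, and so by Eq.~\eqref{eq:select_bridge} the compiled circuit also equals $\mathrm{SELECT}_L(I_A\otimes J_n)$. This second route does not even rely on the ordering of the $m$-factors, because the shifts $L_{2^m,n}$ commute.

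The main obstacle is bookkeeping rather than mathematics: keeping the little-endian qubit indexing and the right-to-left gate ordering consistent between Eq.~\eqref{eq:inc_compiled_def} and the hypothesis of Lemma~\ref{lem:compiled_inc}. In particular, I must confirm that the "original lower bits" argument applies after the relabeling, which it does because each target $D_j$ is visited before any of its control bits $D_m,\dots,D_{j-1}$ are modified.
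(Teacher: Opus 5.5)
Your proposal is correct and takes essentially the same route as the paper. The paper also argues block by block: Eq.~\eqref{eq:inc_compiled_controlled} with Lemma~\ref{lem:compiled_inc} gives $W_{A_m}[\mathrm{INC}_{[m:n-1]}^{\mathrm{(cmp)}}]=W_{A_m}[L_{1,n-m}\otimes I^{\otimes m}]$, Lemma~\ref{lem:suffix_shift} identifies this with $W_{A_m}[L_{2^m,n}]$, and Eq.~\eqref{eq:select_tildeL_coherent} closes the argument, so your extra basis-state check and the $\mathrm{SELECT}_L(I_A\otimes J_n)$ route are valid but not needed.
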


\begin{proof}
By Eq.~\eqref{eq:inc_compiled_controlled} and Lemma~\ref{lem:compiled_inc},
\begin{equation}
W_{A_m}\!\left[\mathrm{INC}_{[m:n-1]}^{\mathrm{(cmp)}}\right]
=
W_{A_m}[L_{1,n-m}\otimes I^{\otimes m}].
\end{equation}
By Lemma~\ref{lem:suffix_shift},
\begin{equation}
W_{A_m}\!\left[\mathrm{INC}_{[m:n-1]}^{\mathrm{(cmp)}}\right]
=
W_{A_m}[L_{2^m,n}]
\end{equation}
for every $m$. Therefore
\begin{equation}
\left(\prod_{m=0}^{n-1}W_{A_m}\!\left[\mathrm{INC}_{[m:n-1]}^{\mathrm{(cmp)}}\right]\right)(I_A\otimes J_n)
=
\left(\prod_{m=0}^{n-1}W_{A_m}[L_{2^m,n}]\right)(I_A\otimes J_n)
=
\mathrm{SELECT}_{\widetilde{L}},
\end{equation}
where the last equality is Eq.~\eqref{eq:select_tildeL_coherent}. This is block-by-block equality of the construction, not only input-output equivalence.
\end{proof}

\myheading{Relation to the recursive block.}
\label{app:compiled_recursive_relation}
The compiled increment recursion is not an independent starting point; it is the arithmetic carry-propagation form of the same unit shift $L_{1,n}$ that appears in the main text through the reflected generator relation
\begin{equation}
L_{1,n}=U_nJ_n.
\end{equation}
Thus, for each $m$,
\begin{equation}
\mathrm{INC}_{[m:n-1]}^{\mathrm{(cmp)}}
=
L_{1,n-m}\otimes I^{\otimes m}
=
\left(U_{n-m}J_{n-m}\right)\otimes I^{\otimes m}.
\end{equation}
In particular, for $m=0$,
\begin{equation}
\mathrm{INC}_{[0:n-1]}^{\mathrm{(cmp)}}=U_nJ_n=L_{1,n}.
\end{equation}
Accordingly, the main text recursion of $U_n$ should be viewed as a structural normal form for the reflected generator, whereas the compiled recursion of $\mathrm{INC}$ is the corresponding binary carry-propagation realization of the same shift action. See Fig.~\ref{fig:qc_bitwise_compiled}.

\myheading{Two recursive realizations of the controlled shift block.}
\label{app:shiftblock_recursions}

\begin{algorithm}[H]
    \caption{Direct recursive realization of $W_{A_m}[L_{2^m,n}]$}
    \label{alg:direct_recursive_block}
    \begin{algorithmic}[1]
        \Procedure{\textsc{Synthesize-U}}{$k;\,c_1,\dots,c_r;\,q_0,\dots,q_{k-1}$}
            \If{$k=1$}
                \State \Return
            \EndIf
            \State Apply $X_{q_0}$
            \State \Call{\textsc{Synthesize-U}}{$k-1;\,c_1,\dots,c_r,q_0;\,q_1,\dots,q_{k-1}$}
            \State Apply $X_{q_0}$
            \For{$j=1$ \textbf{to} $k-1$}
                \State Apply $W_{c_1,\dots,c_r,q_0}[X_{q_j}]$
            \EndFor
        \EndProcedure

        \Procedure{\textsc{Apply-Direct-Block}}{$A_m;\,D_m,\dots,D_{n-1}$}
            \For{$j=m$ \textbf{to} $n-1$}
                \State Apply $W_{A_m}[X_{D_j}]$
            \EndFor
            \State \Call{\textsc{Synthesize-U}}{$n-m;\,A_m;\,D_m,\dots,D_{n-1}$}
        \EndProcedure
    \end{algorithmic}
\end{algorithm}

In Algorithm~\ref{alg:quantum_conv}, the only realization-dependent ingredient is the controlled block shift $W_{A_m}[L_{2^m,n}]$. We record below two recursive realizations of $W_{A_m}[L_{2^m,n}]$. Algorithm~\ref{alg:direct_recursive_block} gives the direct structural-recursive realization induced by the reflected-generator recursion of Appendix~\ref{app:Un_recursion}. Algorithm~\ref{alg:bitwise_recursive_block} gives the optimized bitwise recursive compilation in carry-propagation form. By Proposition~\ref{prop:exact_equiv}, both realize this target block.

\begin{algorithm}[H]
    \caption{Bitwise recursive realization of $W_{A_m}[L_{2^m,n}]$}
    \label{alg:bitwise_recursive_block}
    \begin{algorithmic}[1]
        \Procedure{\textsc{Apply-Compiled-Inc}}{$k;\,c;\,q_0,\dots,q_{k-1}$}
            \If{$k=1$}
                \State Apply $W_c[X_{q_0}]$
                \State \Return
            \EndIf
            \State Apply $W_{c,q_0,\dots,q_{k-2}}[X_{q_{k-1}}]$
            \State \Call{\textsc{Apply-Compiled-Inc}}{$k-1;\,c;\,q_0,\dots,q_{k-2}$}
        \EndProcedure

        \Procedure{\textsc{Apply-Compiled-Block}}{$A_m;\,D_m,\dots,D_{n-1}$}
            \State \Call{\textsc{Apply-Compiled-Inc}}{$n-m;\,A_m;\,D_m,\dots,D_{n-1}$}
        \EndProcedure
    \end{algorithmic}
\end{algorithm}

\bibliography{refs}

\end{document}